\title{MeCeFO: Enhancing LLM Training Robustness via Fault-Tolerant Optimization}
\author{%
  \hspace{2mm}Rizhen Hu\thanks{Equal contribution.}\\ \hspace{2mm}Peking University\\\texttt{rzhu25@stu.pku.edu.cn} \And Yutong He$^*$ \\ Peking University\\\texttt{yutonghe@pku.edu.cn} \And\hspace{-5mm} Ran Yan \\ \hspace{-4mm}HKUST\\\texttt{ryanaf@connect.ust.hk} \AND Mou Sun \\ Zhejiang Lab\\\texttt{123sssmmm@gmail.com} \And Binhang Yuan\thanks{Corresponding author. Kun Yuan is also affiliated with AI for Science Institute, Beijing, China.}\\ HKUST\\\texttt{ biyuan@ust.hk} \And \hspace{2mm}Kun Yuan$^\dagger$\\\hspace{1mm} Peking University\\\hspace{1mm} \texttt{kunyuan@pku.edu.cn}
}
\newtheorem{assumption}{Assumption}
\newtheorem{theorem}{Theorem}
\newtheorem{lemma}{Lemma}
\newtheorem{corollary}{Corollary}
\renewcommand{\algorithmicrequire}{\textbf{Input:}}
\newcommand{\algskip}{\hskip\algorithmicindent}
\newcommand{\textub}[1]{\textbf{\underline{#1}}}
\newcommand{\etc}{\textit{etc.}}
\newcommand{\ie}{\textit{i.e.}}
\begin{document}

\maketitle

\begin{abstract}
  As distributed optimization scales to meet the demands of Large Language Model (LLM) training, hardware failures become increasingly non-negligible. Existing fault-tolerant training methods often introduce significant computational or memory overhead, demanding additional resources. To address this challenge, we propose \textub{Me}mory- and \textub{C}omputation- \textub{e}fficient \textub{F}ault-tolerant \textub{O}ptimization (\textub{MeCeFO}), a novel algorithm that ensures robust training with minimal overhead. When a computing node fails, MeCeFO seamlessly transfers its training task to a neighboring node while employing memory- and computation-efficient algorithmic optimizations to minimize the extra workload imposed on the neighboring node handling both tasks. MeCeFO leverages three key algorithmic designs: (i) Skip-connection, which drops the multi-head attention (MHA) module during backpropagation for memory- and computation-efficient approximation; (ii) Recomputation, which reduces activation memory in feedforward networks (FFNs); and (iii) Low-rank gradient approximation, enabling efficient estimation of FFN weight matrix gradients. Theoretically, MeCeFO matches the convergence rate of conventional distributed training, with a rate of $\mathcal{O}(1/\sqrt{nT})$, where $n$ is the data parallelism size and $T$ is the number of iterations. Empirically, MeCeFO maintains robust performance under high failure rates, incurring only a 4.18\% drop in throughput, demonstrating $5.0\times$ to $6.7\times$ greater resilience than previous SOTA approaches. Codes are available at \url{https://github.com/pkumelon/MeCeFO}.
\end{abstract}

\section{Introduction}

Large language models (LLMs) have demonstrated remarkable capabilities across diverse domains, including machine translation, reasoning, planning, coding, \etc, driving their widespread adoption. According to the Chinchilla scaling law~\cite{hoffmann2022training}, model performance scales with the number of model parameters, training tokens, and iterations, necessitating larger model architectures, longer training durations, and, crucially, massive distributed compute resources. For example, Meta's LLaMA 3 405B~\cite{grattafiori2024llama} was trained on 16,000 H100 GPUs for 54 days. Training clusters are continually scaling up, with leading frontier AI efforts now approaching over 100,000 GPUs. At this scale, hardware failures become inevitable---Alibaba reports a downtime percentage of 31.19\% for handling failures~\cite{dong2024boosting}, and Meta reports a frequency of 4 hours per failure on average due to confirmed hardware issues~\cite{grattafiori2024llama}. The potential hardware failures lead to critical challenges for distributed training, degrading GPU utility and training throughput. Such failures present critical challenges for distributed training, reducing GPU utilization and training throughput. The frequency of failures tends to increase with cluster size, making them especially problematic in large-scale systems. This has led to the rise of robust training, which employs fault-tolerant techniques to mitigate the impact of hardware failures.

Existing fault-tolerant approaches in large-scale training focus on \textit{system optimizations}, including checkpointing \cite{eisenman2020check,peng2018optimus,mohan2021checkfreq,wang2024fastpersist,zhang2022opt,athlur2022varuna}, rescheduling \cite{athlur2022varuna,jang2023oobleck}, and redundant computing \cite{thorpe2023bamboo}. Checkpointing methods periodically save training states to allow recovery from the most recent checkpoint after a failure. However, beyond the additional memory and computation overhead, replacing failed devices with spares and reloading checkpoints is time-consuming---posing a significant challenge in large-scale training clusters, where failure frequency is high. Rescheduling techniques aim to avoid recovery delays by dynamically reassigning training tasks based on available resources. Nevertheless, a reduced pool of devices still leads to degraded throughput. Redundant computing improves robustness by replicating tasks across multiple devices, but this significantly lowers effective GPU utilization, even when no failures occur. Overall, existing fault-tolerant methods suffer from inefficiencies, as the redundancies required for robustness can substantially degrade training throughput.

It is important to note that the fault-tolerant approaches discussed above are fundamentally \textit{algorithm-agnostic}; that is, their primary goal is to faithfully execute a given training algorithm step-by-step, irrespective of any encountered failures. However, we contend that the ultimate objective of model training is not necessarily to replicate an exact sequence of computations but rather to obtain model parameters that generalize effectively on the intended tasks. Consequently, intermediate results---and even the final outcomes---need not strictly align with those of a fault-free execution scenario. Instead, the critical factor is the performance of the trained model itself. Notably, optimization methods such as stochastic gradient descent (SGD) and Adam~\cite{kingma2014adam} inherently exhibit robustness to variations and noise in gradient computations, further supporting this viewpoint. This observation implies that rigid adherence to the original training trajectory might be overly restrictive. Relaxing this constraint could potentially enhance the overall efficiency of fault-tolerant training. This motivates a key question: 
\begin{center}
\textit{Can we design fault-tolerant optimization algorithms that are more memory- and compute-efficient by strategically sacrificing computation precision, while still achieving strong model performance?}
\end{center}

In response to this question, we propose MeCeFO, a fault-tolerant optimization algorithm for training transformer-based LLMs that reduces the overhead of fault tolerance through memory- and computation-efficient strategies. Specifically, MeCeFO adopts a neighbor-do-both (NDB) strategy, in which a failed node’s training task is handled by a neighboring node, which is then responsible for executing both its own task and the failed node’s. To alleviate the additional memory overhead on the neighbor node, we introduce a skip-connection technique for the multi-head attention (MHA) module and a recomputation strategy for the feedforward network (FFN). The associated computation overhead is addressed through the combination of skip-connections and a low-rank gradient approximation technique, which compensates for the extra cost introduced by recomputation. Theoretically, we establish a convergence rate of $\mathcal{O}(1/\sqrt{nT})$ for MeCeFO, matching that of standard distributed stochastic gradient descent (SGD). 
Empirically, our experiments demonstrate that MeCeFO incurs only a 4.18\% drop in throughput while maintaining comparable model performance when pre-training LLaMA-7B under high-frequency failures—achieving $5.0\times$ to $6.7\times$ greater resilience than previous state-of-the-art methods.
Our contributions include:
\begin{itemize}[topsep=5pt, leftmargin=*]
    \item We propose MeCeFO, a novel fault-tolerant optimization algorithm with improved efficiency.
    \item We theoretically prove that MeCeFO achieves a convergence rate of $\mathcal{O}(1/\sqrt{nT})$ under mild assumptions, matching that of standard distributed SGD.
    \item We empirically validate MeCeFO across various settings, demonstrating its ability to sustain high training throughput and strong model performance even under frequent failures. In particular, when pre-training LLaMA-7B under high-frequency failure scenarios, MeCeFO incurs only a 4.18\% drop in throughput—achieving $5.0\times$ to $6.7\times$ greater resilience than previous state-of-the-art methods.
\end{itemize}
\section{Preliminaries and related Works}

\noindent\textbf{Transformer models.} This paper primarily focuses on decoder-only transformer models \cite{vaswani2017attention}, which are widely adopted in modern large language model designs, including LLaMA \cite{touvron2023llama,touvron2023llama2,grattafiori2024llama}, GPT \cite{radford2018improving,radford2019language,brown2020language}, DeepSeek \cite{bi2024deepseek,liu2024deepseek,liu2024deepseekv3}, \etc\ In general, transformer models consist of multiple transformer blocks, each containing a multi-head attention (MHA) layer followed by a feedforward network (FFN), with both layers equipped with normalization and residual connections \cite{he2016deep}. A typical MHA module includes four weight matrices: $\bm{W}_{\mathrm{q}}$, $\bm{W}_{\mathrm{k}}$, $\bm{W}_{\mathrm{v}}$, and $\bm{W}_{\mathrm{o}}$. The FFN module is usually a shallow MLP; for example, in LLaMA models, the FFN comprises three weight matrices: $\bm{W}_{\mathrm{gate}}$, $\bm{W}_{\mathrm{up}}$, and $\bm{W}_{\mathrm{down}}$. Popular choices of normalization include LayerNorm \cite{ba2016layer} and RMSNorm \cite{zhang2019root}. Positional information is typically encoded using positional encodings such as RoPE \cite{su2024roformer}.

\noindent\textbf{Hybrid parallelism. }There are several parallel computing patterns used in efficient distributed training, such as data parallelism (DP), pipeline parallelism (PP), tensor parallelism (TP), and sequence parallelism (SP), among others. In this work, we focus primarily on the hybrid parallelism setting that combines data parallelism and pipeline parallelism—a popular strategy for training large-scale deep learning models \cite{fan2021dapple,narayanan2019pipedream,narayanan2021efficient,smith2022using}. Specifically, devices are first grouped into different DP ranks, each responsible for processing a different subset of the training data. Within each DP rank, the devices are further organized into a pipeline to handle different layers of the model.

\noindent\textbf{Memory consumption. }The memory footprint during training consists of four primary components: (i) model weights, (ii) weight gradients, (iii) optimizer states, and (iv) activations.
To maximize hardware utilization and training throughput, practitioners often select large mini-batch sizes, which can cause activations to dominate the overall memory consumption, making it a critical bottleneck in large-scale training, particularly for deep networks with high-dimensional intermediate features.

\noindent\textbf{Computation consumption. }The majority of neural network training computation occurs in dense matrix multiplications within linear layers. Each linear transformation $\bm{y}=\bm{W}\bm{x}$ involves three key operations: (i) forward propagation (\texttt{Fprop}), (ii) weight gradient computation (\texttt{Wgrad}), and (iii) activation gradient computation (\texttt{Dgrad}) during backpropagation. These operations typically require equivalent amount of computation, resulting in a 1:2 ratio of forward to backward pass computation. 

\noindent\textbf{Fault-tolerant algorithms.} Most existing approaches ensure training robustness through checkpointing. For example, \cite{peng2018optimus} restarts from checkpoints when adjusting resource configurations; \cite{eisenman2020check} resumes training from quantized checkpoints; \cite{mohan2021checkfreq} reduces checkpointing overhead by algorithmically tuning the checkpoint frequency and leveraging pipelining; \cite{wang2024fastpersist} accelerates checkpointing using NVMe optimizations and write parallelism; \cite{zhang2022opt} restarts from the last saved checkpoint in response to hardware failures or loss divergences; and \cite{athlur2022varuna} introduces job morphing to dynamically reconfigure training jobs after restarting from checkpoints with the remaining resources. To avoid the recovery overhead of checkpointing, researchers have also proposed fault-tolerant approaches that do not rely on it. Bamboo \cite{thorpe2023bamboo} employs redundant computation to ensure information availability during failures, while Oobleck \cite{jang2023oobleck} precomputes pipeline templates and dynamically adjusts them in response to failures. To the best of our knowledge, this work proposes the first fault-tolerant optimization algorithm that integrates memory- and computation-efficient training techniques to improve efficiency. 

\noindent\textbf{Efficient training.} As memory consumption becomes a major bottleneck in training large-scale models, researchers have developed training algorithms that reduce memory usage. Adapter-based methods such as \cite{houlsby2019parameter,pfeiffer2020adapterhub} fine-tune only parameter-efficient additional modules. LoRA \cite{hu2022lora} and its variants \cite{liu2024dora,zhang2023adalora,meng2024pissa,lialin2023relora} reparameterize dense weight matrices using low-rank adapters to reduce memory costs. \cite{pan2024lisa} proposes randomly activating different layers during training, while \cite{han2024sltrain} combines low-rank and sparse structures for further memory savings. GaLore \cite{zhao2024galore} and its variants \cite{he2024subspace,chen2024fira,robert2024ldadam,zhu2024apollo} apply low-rank gradient approximations to reduce the memory footprint of optimizer states. \cite{jiang2022back,yu2024sheared,miles2024velora} compress activations to save memory. Most memory-efficient training methods can also enhance training throughput by enabling larger batch sizes \cite{zhu2024apollo}. Another line of work improves throughput by directly reducing the number of floating-point operations (FLOPs). For example, DropBP \cite{woo2024dropbp} randomly skips connections during backpropagation, while \cite{adelman2021faster,liu2023winner,chen2025lora} reduce computational cost through approximate matrix multiplication. A major drawback of these memory- and computation-efficient training algorithms is that the resulting models often exhibit a noticeable performance gap compared to standard training. In contrast, MeCeFO applies efficiency techniques only locally and selectively---specifically when handling failures---allowing it to benefit from efficiency gains without compromising model performance.

\section{Method}


Current fault-tolerant methods are suboptimal for deep learning as they prioritize exact step-by-step computation—a requirement inherited from general distributed computing that is unnecessarily stringent for model training. Unlike rigorous distributed computing tasks, deep learning operates within the framework of distributed stochastic optimization:
\begin{align*}
    \min_{\bm{w}\in\mathbb{R}^d}\quad f(\bm{w}):=\frac{1}{n}\sum_{i=1}^nf_i(\bm{w}),\quad\text{where }f_i(\bm{w}):=\mathbb{E}_{\xi\sim\mathcal{D}_i}[F(\bm{w};\xi)].
\end{align*}
Here, $\bm{w}$ collects all the weight parameters in the model, $n$ denotes the number of data-parallel (DP) ranks, $\mathcal{D}_i$ the local data distribution at the $i$-th rank, and $F$ the per-sample loss function. The key insight is that the training objective targets expected loss minimization rather than exact intermediate computations. Crucially: (i) \textbf{Optimizer robustness:} First-order stochastic optimizers (e.g., SGD, Adam) are inherently tolerant to gradient noise; (ii) \textbf{Path independence:} The convergence depends on the quality of final weights, not the precise trajectory. This reveals fundamental redundancy in maintaining exact gradient information for training robustness. Leveraging these observations, MeCeFO strategically relaxes exact computation requirements and incorporates memory- and computation-efficient mechanisms to achieve: (i) reduced memory overhead during fault recovery; (ii) lower computational redundancy; and (iii) maintained convergence guarantees. 

\subsection{MeCeFO overview}

\noindent\textbf{Neighbor-do-both strategy. }Upon detecting a hardware failure, MeCeFO initiates an efficient failover protocol in which the neighbor node within the same DP rank assumes responsibility for both its original workload and the failed node's computational tasks. In the following, we use \textit{failed node} and \textit{neighbor node} to refer to the node occurring hardware failures and the node that takes charge of the failed node's workload, respectively. Although the failed node's memory (including model weights and optimizer states) becomes inaccessible to the rest of the training network, this information is not entirely lost due to the inherent memory redundancy in data parallelism, which maintains identical backups across other DP ranks. In MeCeFO, the neighbor node directly retrieves the required information---including the failed node's model weights and optimizer states—--from the corresponding device responsible for the same layers in another DP group. 

However, naively implementing the neighbor-do-both (NDB) mechanism significantly degrades training efficiency. The neighbor node must maintain doubled memory footprint and computational load during failures, creating two key bottlenecks: (i) \textbf{Memory inefficiency}: Each GPU must reserve half of its total memory capacity to accommodate the additional layers during failure scenarios. This not only reduces memory utilization but also forces smaller macro-batch sizes to avoid out-of-memory (\texttt{OOM}) errors, directly impacting computational throughput; (ii) \textbf{Pipeline imbalance}: Processing doubled computational loads increases execution time proportionally. This creates pipeline bubbles that propagate within the data parallel (DP) rank, forcing other devices to remain idle while waiting for the overloaded node to complete its computations.

To address these challenges, we develop specialized computation- and memory-efficient techniques. Our solution incorporates three key innovations: (i) \textbf{Skip-connection:} We reduce both computational load (\texttt{Wgrad} and \texttt{Dgrad}) and activation memory requirements in the MHA modules through strategic skip-connections; (ii) \textbf{Selective activation recomputation:} For feed-forward network (FFN) modules, we implement an efficient recomputation strategy that maintains only critical activation checkpoints, dramatically reducing memory demands; (iii) \textbf{Low-rank gradient approximation:} We introduce a novel approximation technique for FFN weight gradients that significantly decreases computational complexity of the \texttt{Wgrad} operations, effectively compensating for the overhead introduced by recomputation. Alg.~\ref{alg:mecefo-framework} provides a general view of the proposed MeCeFO algorithm.

{\color{black}
\textbf{Remark.} MeCeFO is designed as a complementary component within broader fault-tolerant frameworks, contributing to the construction of more resilient large-scale training systems. Rather than endorsing a specific solution, we highlight representative examples to illustrate the feasibility and flexibility of integration: MeCeFO effectively addresses isolated node failures; \cite{gupta2014failure} has proposed hierarchical detection mechanisms that target switch-level and interconnect failures in distributed environments; \cite{wu2023transom} has considered broader challenges, such as node freezing, communication disruptions, and software errors. By combining MeCeFO with such system-level techniques, one can construct a more comprehensive and reliable fault-tolerant infrastructure for distributed training.
}

\begin{algorithm}[t]
    \centering
    \caption{MeCeFO Algorithm}
    \label{alg:mecefo-framework}
    \begin{algorithmic}[1]
        \REQUIRE Initial model weights $\bm{w}^{(0)}=\{\bm{W}_{\ell,\#}\}^{1\le \ell\le L}_{\#\in\{\mathrm{q,k,v,o,gate,up,down}\}}$, projection matrix update frequency $\tau$.
        \ENSURE Final model weights $\bm{w}^{(T)}$.
        \STATE Initialize NDB steps $t_{i,\ell}\gets 0$ for rank $i$, layer $\ell$;
        \STATE \textbf{for} global step $t=0$ to $T-1$ \textbf{do}
        \STATE \algskip\textbf{for} DP ranks $i=1$ to $n$ \textbf{do in parallel} 
        \STATE \algskip\algskip Check node availability and rearrange tasks according to the NDB strategy;
        \STATE \algskip\algskip \textbf{for} layer $\ell$ in failed nodes \textbf{do} \COMMENT{on failure}
        \STATE \algskip\algskip\algskip Neighbor node fetches $\bm{W}_{\ell}^{(t)}$ and optimizer states from other DP ranks;
        \STATE \algskip\algskip\algskip Reset local step $t_{i,\ell}\gets0$;
        \STATE \algskip\algskip \textbf{end for}
        \STATE \algskip\algskip \textbf{for} layer $\ell$ in recovered nodes \textbf{do} \COMMENT{on recovery}
        \STATE \algskip\algskip\algskip Original node fetches $\bm{W}_{\ell}^{(t)}$ and optimizer states from the neighbor node;
        \STATE \algskip\algskip \textbf{end for}
        \STATE \algskip\algskip \textbf{for} layer $\ell=1$ to $L$ \textbf{do}
        \STATE \algskip\algskip\algskip Execute forward pass \texttt{MeCeFO\_Forward}$(\bm{W}_{\ell}^{(t)})$; \COMMENT{forward pass}
        \STATE \algskip\algskip \textbf{end for}
        \STATE \algskip\algskip \textbf{for} layer $\ell=L$ to $1$ \textbf{do}
        \STATE \algskip\algskip\algskip Execute backward pass \texttt{MeCeFO\_Backward}$(\bm{W}_{\ell}^{(t)},t_{i,\ell})$; \COMMENT{backward pass}
        \STATE \algskip\algskip\algskip Get averaged gradients $\overline{\bm{G}}_{\ell}^{(t)}$ according to \eqref{eq:grad-avg-fail}; \COMMENT{gradient averaging}
        \STATE \algskip\algskip\algskip Execute optimizer step to get $\bm{W}_{\ell}^{(t+1)}$ according to gradient $\overline{\bm{G}}_{\ell}^{(t)}$; \COMMENT{optimizer step}
        \STATE \algskip\algskip \textbf{end for}
        \STATE \algskip \textbf{end for}
        \STATE \textbf{end for}
    \end{algorithmic}
\end{algorithm}

\begin{figure}[t]
    \centering
    \begin{minipage}[c]{0.48\textwidth}
    \centering
        \includegraphics[width=.9\textwidth]{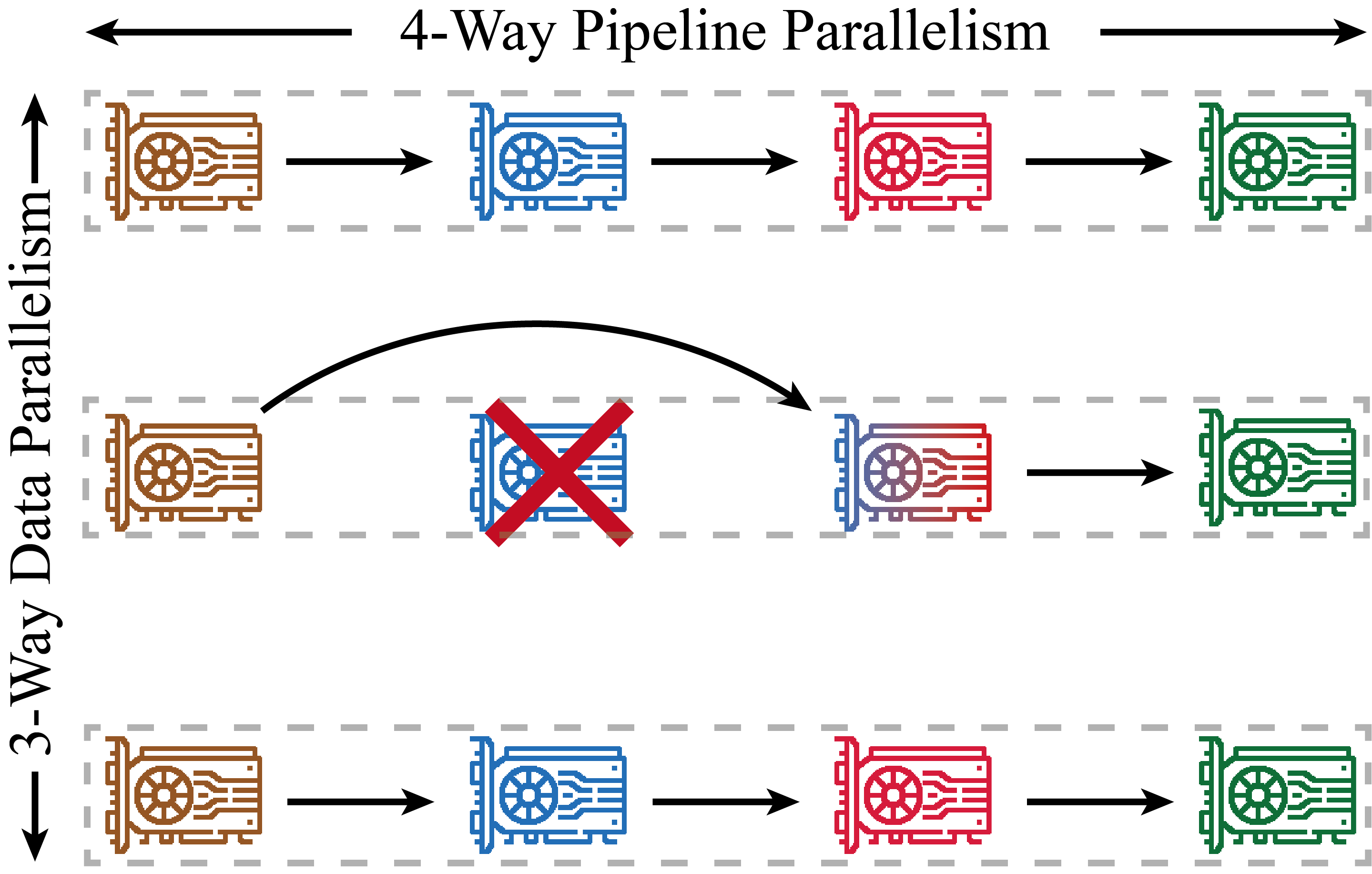}
    \end{minipage}
    \begin{minipage}[c]{0.48\textwidth}
    \centering
        \includegraphics[width=.9\textwidth]{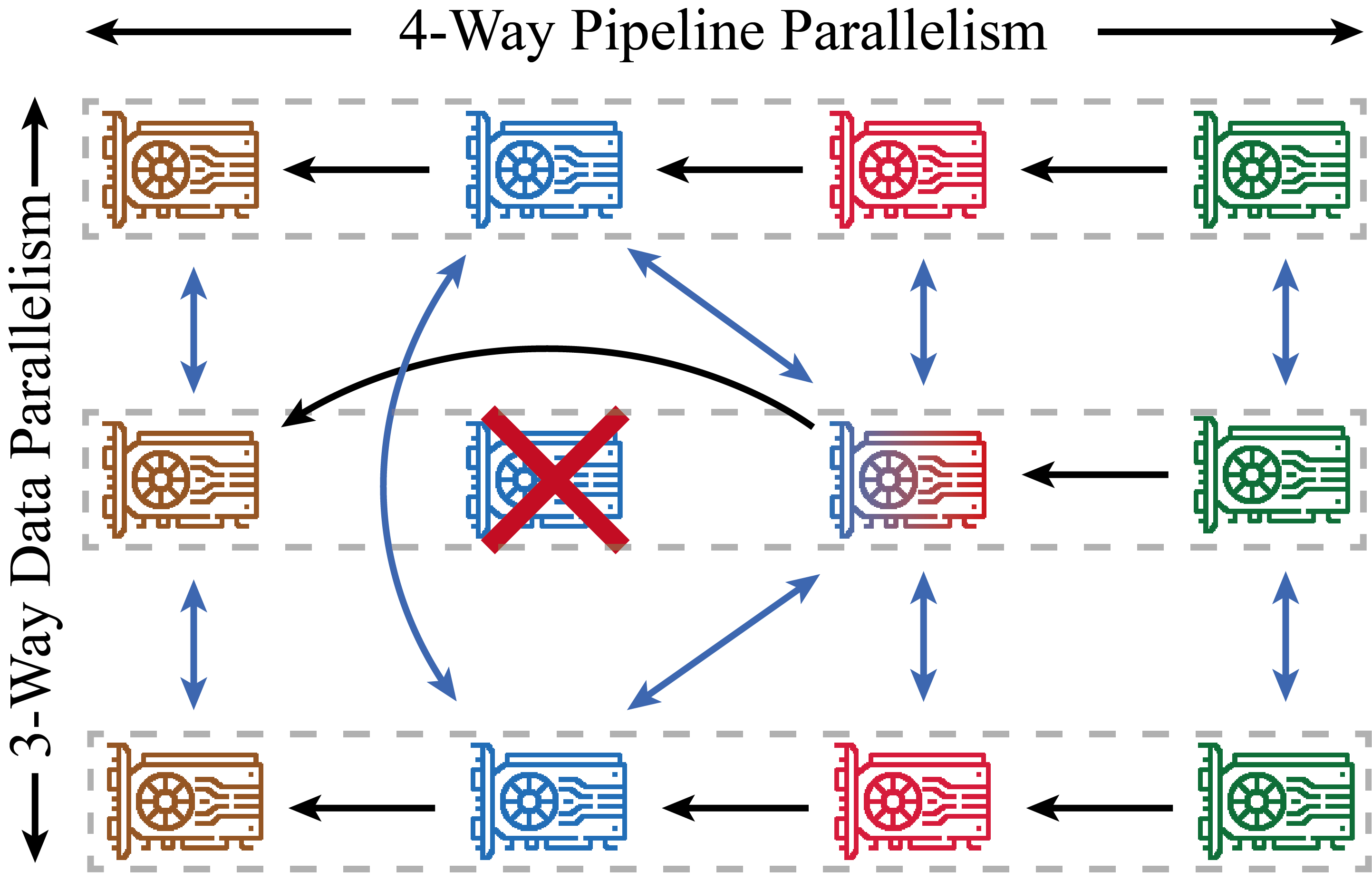}
    \end{minipage} 
    \caption{\small Overview of the MeCeFO framework. During both forward (left) and backward (right) propagation, the workload of a failed node is offloaded to a neighboring node within the same data parallel (DP) group.}
    \label{fig:overview}
\end{figure}



\subsection{Key technique I: skip-connection}

\begin{figure}[t]
\centering
    \begin{minipage}[c]{0.61\textwidth}
    \centering 
    \includegraphics[width=\textwidth]{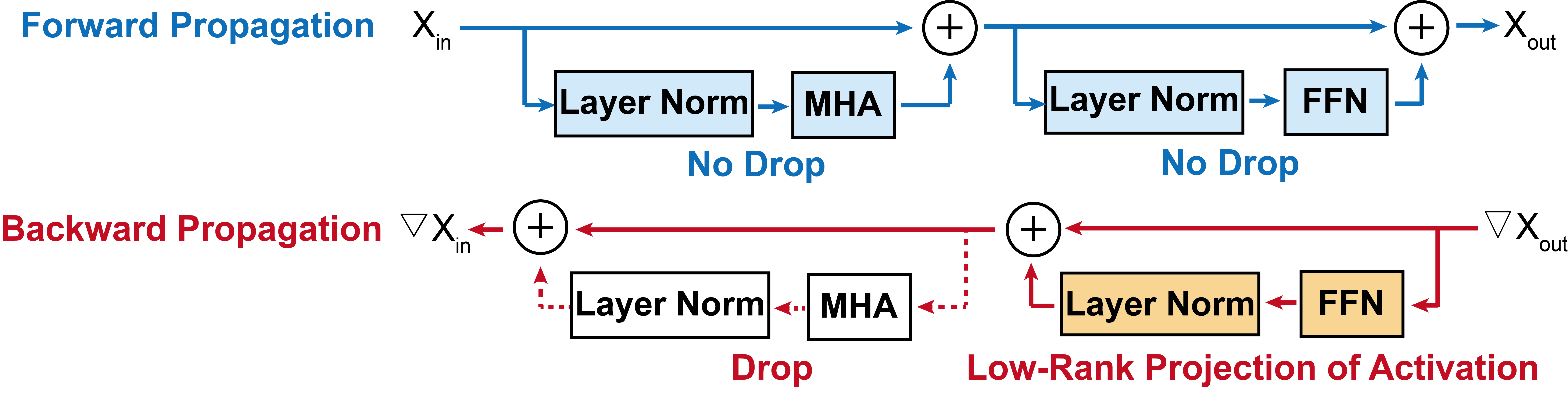}   
    \caption{\small The skip-connection technique in MHA layers. We only skip the MHA's connection in the backward propagation.} 
    \label{fig:skip_connection}
    \end{minipage}
    \hspace{2mm}
    \begin{minipage}[c]{0.35\textwidth}
    \centering 
    \includegraphics[width=\textwidth]{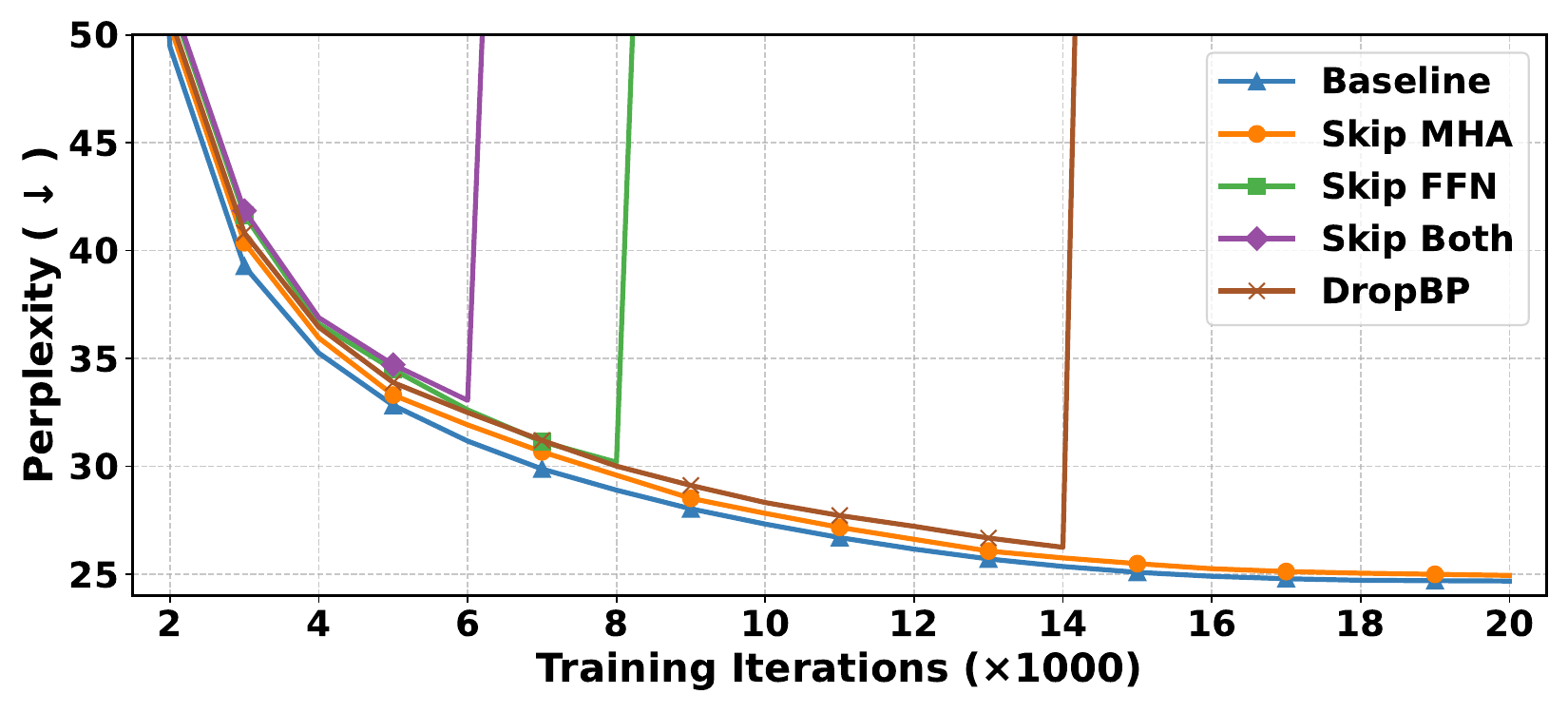}
    \caption{\small Ablation of module skipping in LLaMA-130M pre-training.} 
    \label{fig:ablation_study_skiping_connection}
    \end{minipage}
\end{figure}

MeCeFO's skip-connection technique draws inspiration from DropBP \cite{woo2024dropbp}. While DropBP randomly skips connections in both the MHA and FFN modules with varying probabilities, MeCeFO employs a deterministic strategy that consistently skips the MHA module connections and maintains connectivity for the FFN module, as illustrated in Fig.~\ref{fig:skip_connection}. This design choice stems from our empirical observation (Fig.~\ref{fig:ablation_study_skiping_connection}) that skipping MHA connections introduces significantly less training disruption than alternative choices. 
When neighbor nodes skip MHA in the backward pass, gradient contributions come exclusively from unaffected DP ranks. Formally, the averaged gradients are computed as: 
\begin{align}
    \overline{\bm{G}}_{\ell,\#}=\frac{1}{|\mathcal{N}_{\ell,\#}|}\sum_{i\in \mathcal{N}_{\ell,\#}}\bm{G}_{i,\ell,\#},\label{eq:grad-avg-fail}
\end{align}
where $\#\in\{\mathrm{q},\mathrm{k},\mathrm{v},\mathrm{o}\}$, $\mathcal{N}_{\ell,\#}\subseteq\{1,2,\cdots,n\}$ represents active DP ranks where the device responsible for training weight matrix $\bm{W}_{\ell,\#}$ in the $\ell$-th layer is neither failed nor serving as a neighbor node of a failed one, and $\bm{G}_{i,\ell,\#}$ represents the stochastic gradient regarding $\bm{W}_{\ell,\#}$ computed by DP $i$. When all DP ranks are unaffected, \ie, $\mathcal{N}_{\ell,\#}=\{1,2,\cdots,n\}$, \eqref{eq:grad-avg-fail} reduces to the standard format:
\begin{align*}
    \overline{\bm{G}}_{\ell,\#}=\frac{1}{n}\sum_{i=1}^n\bm{G}_{i,\ell,\#}.
\end{align*}

\noindent\textbf{Memory efficiency. }The skip-connection technique eliminates the need for neighbor nodes to store activations in MHA modules, significantly reducing memory overhead when handling doubled tasks.

\noindent\textbf{Computation efficiency. }The skip-connection design eliminates the need for neighbor nodes to compute \texttt{Wgrad} and \texttt{Dgrad} in MHA modules, 
significantly reducing the computation costs.


\subsection{Key technique II: selective activation recomputation}

Unlike MHA modules, applying skip-connections to FFN modules proves problematic for two key reasons: (i) FFN-skip-connections introduce substantial approximation errors in input activation gradients, severely degrading backpropagation quality; (ii) In failure-prone scenarios, reduced participation of DP ranks for FFN weight updates exacerbates data heterogeneity issues, leading to non-negligible gradient bias. To maintain training stability while preserving memory efficiency, we instead employ activation recomputation for FFN modules. Specifically, neighbor nodes only maintain the input to each FFN modules and recompute all other activations during back propagation. 


\noindent\textbf{Memory efficiency. }The recomputation technique eliminates the need for neighbor nodes to store intermediate activations in FFN modules, significantly reducing the memory overhead.

\noindent\textbf{Computation overhead. }The recomputation technique introduces additional computational costs for neighbor nodes. Specifically, each FFN module requires one additional forward pass (\texttt{Rcomp}). This recomputation cost is equivalent to a standard \texttt{Fprop} operation. The total overhead amounts to approximately one third of the baseline FFN computation cost in normal training scenarios. 


\subsection{Key technique III: low-rank gradient approximation}

Although the memory cost for both the MHA module and the FFN module has been significantly reduced by techniques I and II, only MHA's computation cost has been reduced, and FFN's computation cost has been increased by 1/3. To mitigate this issue, we propose the following low-rank gradient approximation technique to compensate for the recomputation overhead. Specifically, for a linear layer $\bm{y}=\bm{W}\bm{x}$ in the FFN module with $\bm{W}\in\mathbb{R}^{m\times n}$, we conduct singular value decomposition (SVD) of $\bm{W}$ yielding $\bm{W}=\bm{U}\bm{\Sigma}\bm{V}^\top$. Let $\bm{V}_1=\bm{V}[:,:r]$ collect the top-$r$ right singular vectors---first $r$ columns of $\bm{V}$---we have the following approximation:
\begin{align}
\bm{G}_{\mathrm{W}}=\bm{G}_{\mathrm{y}}\bm{x}^\top=\bm{G}_{\mathrm{y}}\bm{x}^\top\bm{V}\bm{V}^\top\approx\bm{G}_{\mathrm{y}}(\bm{x}^\top\bm{V}_1)\bm{V}_1^\top.\label{eq:tech3}
\end{align}

Here, $\bm{G}_{\mathrm{y}}$ represents the gradient of activation $\bm{y}$. We only recompute matrix $\bm{V}_1$ every $\tau$ iteration to further reduce the SVD overhead.

\noindent\textbf{Memory overhead. }The additional memory for $\bm{V}_1\in\mathbb{R}^{n\times r}$ is negligible when $r\ll \min\{m,n\}$.

\noindent\textbf{Computation efficiency. }Let $b$ denote the batch size times sequence length. Compared with the original FLOPs $2bmn$ to compute $\bm{G}_{\mathrm{W}}=\bm{G}_{\mathrm{y}}\bm{x}^\top$, applying the low-rank gradient approximation technique requires only $(2brn+2brm+2rmn)$ FLOPs. When $r\ll\min\{b,m,n\}$, the approximated \texttt{Wgrad} operation is computationally negligible, approximately compensating for the \texttt{Rcomp} overhead.

\begin{algorithm}[t]
    \centering
    \caption{MeCeFO Forward Pass}
    \label{alg:mecefo-forward}
    \begin{algorithmic}[1]
    \renewcommand{\algorithmicrequire}{\textbf{def}}
    \REQUIRE \texttt{MeCeFO\_Forward}$(\bm{W}_{\ell}^{(t)})$:
    \STATE \textbf{if} node not taking doubled workload \textbf{then} \COMMENT{standard node}
    \STATE \algskip Execute standard MHA and FFN forward pass with all activations maintained;
    \STATE \textbf{else} \COMMENT{neighbor node}
    \STATE \algskip Execute standard MHA and FFN with only input activations to FFN maintained;
    \STATE \textbf{end if}
    \end{algorithmic}
\end{algorithm}

\begin{algorithm}[t]
    \centering
    \caption{MeCeFO Backward Pass}
    \label{alg:mecefo-backward}
    \begin{algorithmic}[1]
    \renewcommand{\algorithmicrequire}{\textbf{def}}
    \REQUIRE \texttt{MeCeFO\_Backward}$(\bm{W}_{\ell}^{(t)},t_{i,\ell})$:
    \STATE \textbf{if} node no taking doubled workload \textbf{then} \COMMENT{standard node}
    \STATE \algskip Execute standard MHA and FFN backward pass yielding gradients $\bm{G}_{i,\ell}^{(t)}$;
    \STATE \textbf{else} \COMMENT{neighbor node}
    \STATE \algskip \textbf{if} $t_{i,\ell}\equiv0\ (\mathrm{mod}\ \tau)$ \textbf{then} \COMMENT{compute projection matrix every $\tau$ iterations}
    \STATE \algskip\algskip $\bm{W}^{(t)}_{\ell,\#}=\bm{U}^{(t)}_{\ell,\#}\bm{\Sigma}^{(t)}_{\ell,\#}\bm{V}^{(t)}_{\ell,\#}$, $\tilde{\bm{V}}^{(t)}_{i,\ell,\#}\gets\bm{V}^{(t)}_{\ell,\#}[:,:r]$, $\#\in\{\mathrm{gate},\mathrm{up},\mathrm{down}\}$;
    \STATE \algskip \textbf{else}
    \STATE \algskip\algskip $\tilde{\bm{V}}_{i,\ell,\#}^{(t)}\gets\tilde{\bm{V}}_{i,\ell,\#}^{(t-1)}$, $\#\in\{\mathrm{gate},\mathrm{up},\mathrm{down}\}$; \COMMENT{reuse projection}
    \STATE \algskip Recompute FFN activations using the maintained input activations;
    \STATE \algskip Apply \eqref{eq:tech3} to approximate $\bm{G}_{i,\ell,\#}^{(t)}$ via $\tilde{\bm{V}}_{i,\ell,\#}^{(t)}$, $\#\in\{\mathrm{gate},\mathrm{up},\mathrm{down}\}$;
    \STATE \algskip Skip the MHA connection and propagate activation gradients to previous layers;
    \STATE \algskip Update local step $t_{i,\ell}\gets t_{i,\ell}+1$;
    \STATE \algskip \textbf{end if}
    \STATE \textbf{end if}  
    \end{algorithmic}
\end{algorithm}

\section{Convergence analysis}\label{sec:theory}

\begin{assumption}[Lower-boundedness and $L$-smoothness]\label{asp:smooth}
    We assume function $f:\mathbb{R}^{d}\rightarrow\mathbb{R}$ satisfies
    \begin{align*}
        \inf_{\bm{w}\in\mathbb{R}^d}f(\bm{w})>-\infty,\quad\text{and}\quad\|\nabla f(\bm{w})-\nabla f(\bm{w}')\|_2\le L\|\bm{w}-\bm{w}'\|_2,\ \forall\ \bm{w},\bm{w}'\in\mathbb{R}^d.
    \end{align*}
\end{assumption}
\begin{assumption}[Stochastic gradient]\label{asp:sto-grad}
    We assume the stochastic gradient oracles satisfy
    \begin{align*}
        \mathbb{E}_{\xi\sim\mathcal{D}_i}[\nabla F(\bm{w};\xi)]=\nabla f_i(\bm{w}),\quad\text{and}\quad\mathbb{E}_{\xi\sim\mathcal{D}_i}[\|\nabla F(\bm{w};\xi)-\nabla f_i(\bm{w})\|_2^2]\le\sigma^2,
    \end{align*}
    for $\forall\ i\in\{1,2,\cdots,n\}$ and some $\sigma>0$.
\end{assumption}
\begin{assumption}[Gradient error]\label{asp:grad-err}
We assume that the following inequalities hold for MeCeFO's approximated gradient $\overline{\bm{g}}^{(t)}$ during the optimization process $t=0,1,\cdots,T-1$: 
\begin{align*}
    \left\|\overline{\bm{g}}^{(t)}-\overline{\bm{g}}_\star^{(t)}\right\|_2^2\le(1-\delta)\left\|\overline{\bm{g}}_\star^{(t)}\right\|_2^2,
\end{align*}
\begin{align*}
    \left\|\mathbb{E}_{\xi_i\sim\mathcal{D}_i}\left[\overline{\bm{g}}^{(t)}\right]-\nabla f(\bm{w}^{(t)})\right\|_2^2\le(1-\delta)\left\|\nabla f(\bm{w}^{(t)})\right\|_2^2,
\end{align*}
where $\delta\in(0,1]$ and $\overline{\bm{g}}_\star^{(t)}$ represents the fault-free averaged stochastic gradient at step $t$.
\end{assumption}
\noindent\textbf{Remark. }Assumptions \ref{asp:smooth} and \ref{asp:sto-grad} are standard assumptions commonly used in convergence analysis. To validate Assumption \ref{asp:grad-err}, we empirically observe the relative errors through our experiments. Specifically, we observe the single-batch relative error $\|\overline{\bm{g}}^{(t)}-\overline{\bm{g}}_\star^{(t)}\|_2^2/\|\overline{\bm{g}}_\star^{(t)}\|_2^2$ and full-batch relative error $\|\mathbb{E}_{\xi_i\sim\mathcal{D}_i}[\overline{\bm{g}}^{(t)}]-\nabla f(\bm{w}^{(t)})\|_2^2/\|\nabla f(\bm{w}^{(t)})\|_2^2$ while pre-training LLaMA-1B. As illustrated in Fig.~\ref{fig:asp-single-batch} and \ref{fig:asp-full-batch}, these errors are consistently smaller than 0.6, justifying the application of Assumption \ref{asp:grad-err}. 

\begin{figure}[t]
    \centering
    \begin{minipage}[c]{0.48\textwidth}
        \includegraphics[width=\textwidth]{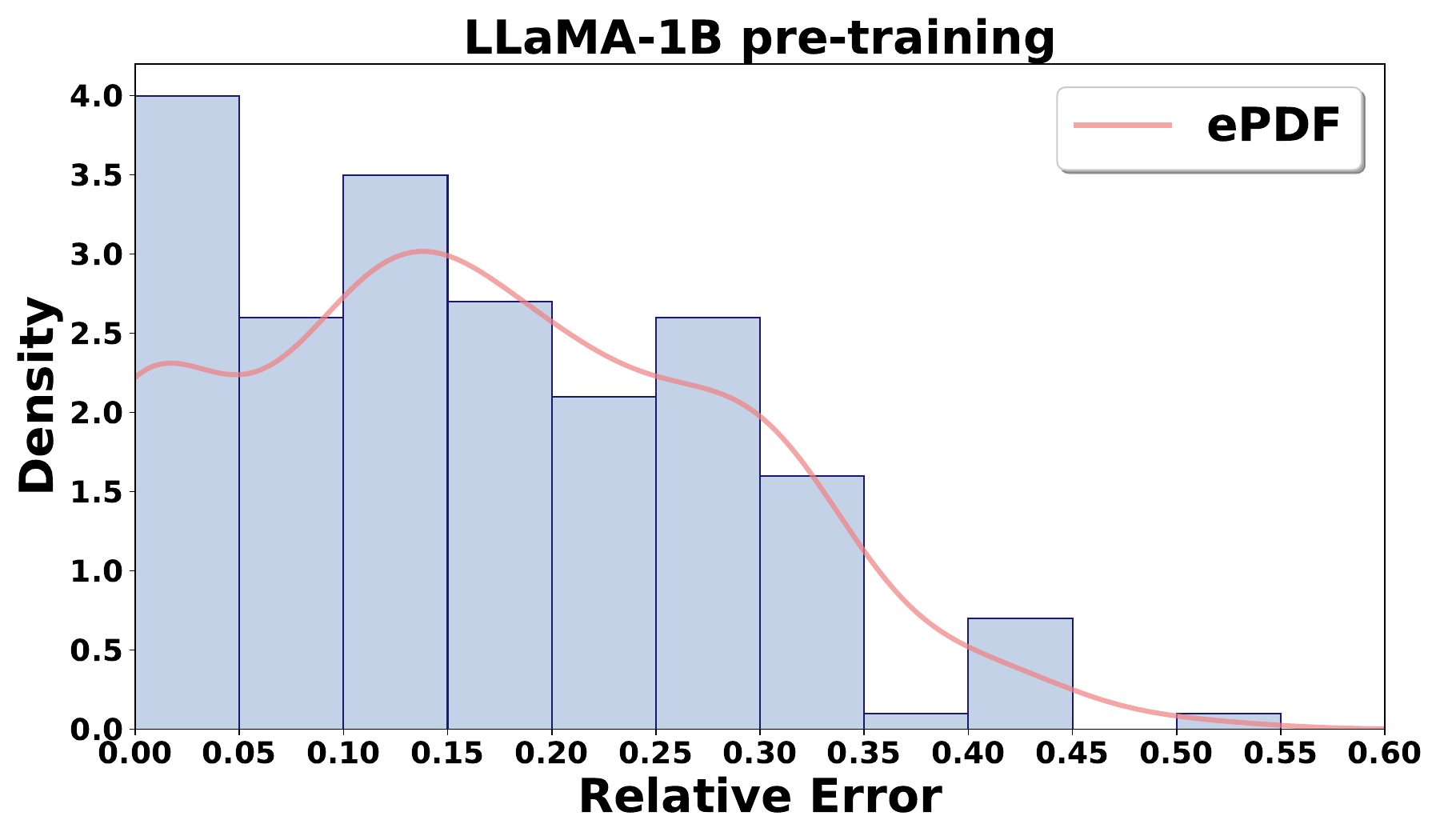}
    \caption{\small Single-batch relative error of pre-training LLaMA-1B on the C4 dataset.}
    \label{fig:asp-single-batch}
    \end{minipage}
    \hspace{2mm}
    \begin{minipage}[c]{0.48\textwidth}
        \includegraphics[width=\textwidth]{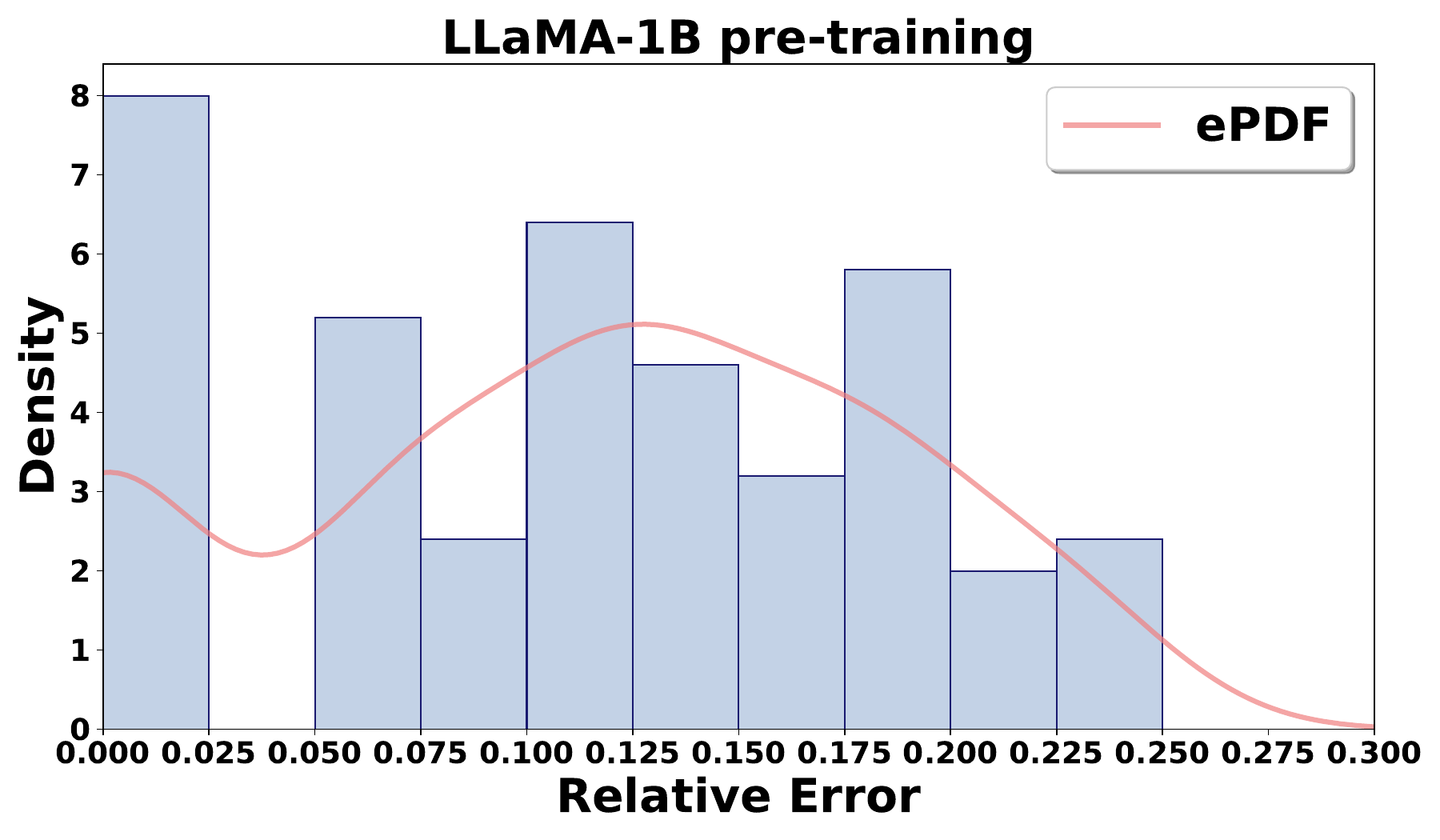}
    \caption{\small Full-batch relative error of pre-training LLaMA-1B on the C4 dataset.}
    \label{fig:asp-full-batch}
    \end{minipage} 
\end{figure}

\begin{table}[t]
\centering
\caption{\small Configuration of Different Failure Scenarios}
\label{tab:failure_scenarios}
\begin{tabular}{lcc}
\toprule
\textbf{Scenario Name} & \textbf{Failure Interval} & \textbf{Node Recovery Time} \\
\midrule
Low Frequency Failure & Every 2 hours & Every 4 hours \\
Medium Frequency Failure & Every 1 hour & Every 3 hours \\
High Frequency Failure & Every 0.5 hours & Every 2 hours \\
\bottomrule
\end{tabular}
\end{table}

Below we present the convergence results of MeCeFO.
\begin{theorem}\label{thm:convergence}
    Under Assumptions \ref{asp:smooth}-\ref{asp:grad-err}, if momentum parameter $\beta_1\in(1-\delta/(24-12\delta),1)$ and learning rate $\eta\le\min\{1/(2L),$ $\sqrt{(\delta(1-\beta_1)^2)/(8L^2)}\}$, MeCeFO (with momentum SGD) converges as
    \begin{align*}
        \frac{1}{T+1}\sum_{t=0}^T\mathbb{E}[\|\nabla f(\bm{w}^{(t)})\|_2^2]\le\frac{8\Delta}{\delta\eta(T+1)}+\frac{8\Delta_1}{\delta(1-\beta_1)(T+1)}+\frac{24(1-\beta_1)\sigma^2}{\delta n},
    \end{align*}
    where $\Delta:=f(\bm{w}^{(0)})-\inf_{\bm{w}}f(\bm{w})$, and $\Delta_1:=\|\bm{m}^{(0)}-\nabla f(\bm{w}^{(0)})\|_2^2$. (Proof is in Appendix~\ref{app:pfthm})
\end{theorem}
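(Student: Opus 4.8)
The plan is to adapt the standard descent-lemma argument for momentum SGD, but carefully tracking how the gradient-error Assumption~\ref{asp:grad-err} interacts with the momentum recursion. Write $\bm{m}^{(t+1)} = \beta_1 \bm{m}^{(t)} + (1-\beta_1)\overline{\bm{g}}^{(t)}$ and $\bm{w}^{(t+1)} = \bm{w}^{(t)} - \eta \bm{m}^{(t)}$ (or the analogous order used in the appendix). First I would apply $L$-smoothness (Assumption~\ref{asp:smooth}) to get the one-step inequality
\begin{align*}
    \mathbb{E}[f(\bm{w}^{(t+1)})] \le \mathbb{E}[f(\bm{w}^{(t)})] - \eta\,\mathbb{E}\langle \nabla f(\bm{w}^{(t)}), \bm{m}^{(t)}\rangle + \frac{L\eta^2}{2}\mathbb{E}\|\bm{m}^{(t)}\|_2^2.
\end{align*}
The cross term is the crux: I would split $\bm{m}^{(t)}$ against $\nabla f(\bm{w}^{(t)})$ by introducing the error vector $\bm{e}^{(t)} := \bm{m}^{(t)} - \nabla f(\bm{w}^{(t)})$, so that $-\eta\langle\nabla f,\bm{m}^{(t)}\rangle = -\eta\|\nabla f(\bm{w}^{(t)})\|_2^2 - \eta\langle\nabla f(\bm{w}^{(t)}),\bm{e}^{(t)}\rangle$, and then absorb the inner product using Young's inequality. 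This produces a term $\tfrac{\eta}{2}\|\bm{e}^{(t)}\|_2^2$ that must be controlled.

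Next I would set up a Lyapunov/potential function of the form $\Phi^{(t)} := f(\bm{w}^{(t)}) + \tfrac{c\eta}{1-\beta_1}\mathbb{E}\|\bm{e}^{(t)}\|_2^2$ for a suitable constant $c$, and derive a recursive bound on $\mathbb{E}\|\bm{e}^{(t+1)}\|_2^2$. Unrolling the momentum recursion, $\bm{e}^{(t+1)} = \beta_1 \bm{e}^{(t)} + \beta_1(\nabla f(\bm{w}^{(t)}) - \nabla f(\bm{w}^{(t+1)})) + (1-\beta_1)(\overline{\bm{g}}^{(t)} - \nabla f(\bm{w}^{(t)}))$. The first term contracts by $\beta_1$; the second is bounded via $L$-smoothness by $L\eta\|\bm{m}^{(t)}\|_2^2$-type quantities; the third is where Assumption~\ref{asp:grad-err} enters twice — once to bound the bias $\|\mathbb{E}[\overline{\bm{g}}^{(t)}] - \nabla f(\bm{w}^{(t)})\|_2^2 \le (1-\delta)\|\nabla f(\bm{w}^{(t)})\|_2^2$, and once, combined with Assumption~\ref{asp:sto-grad}, to bound the variance $\mathbb{E}\|\overline{\bm{g}}^{(t)} - \mathbb{E}[\overline{\bm{g}}^{(t)}]\|_2^2$ by roughly $\|\overline{\bm{g}}_\star^{(t)} - \nabla f(\bm{w}^{(t)})\|_2^2$-type terms plus the $\sigma^2/n$ term coming from averaging $n$ independent stochastic gradients. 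The relative-error form of the assumptions is what forces the final bound to carry the $\delta$ factors in the denominators rather than an additive error floor; keeping the $(1-\delta)$ coefficients strictly less than $1$ is what makes the contraction close.

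Then I would telescope: summing the descent inequality for $\Phi^{(t)}$ from $t=0$ to $T$, the potential differences collapse to $\Phi^{(0)} - \Phi^{(T+1)} \le \Delta + \tfrac{c\eta}{1-\beta_1}\Delta_1$ (using $\bm{e}^{(0)} = \bm{m}^{(0)} - \nabla f(\bm{w}^{(0)})$ and lower-boundedness), and the left side retains $\tfrac{\delta\eta}{8}\sum_{t=0}^T\mathbb{E}\|\nabla f(\bm{w}^{(t)})\|_2^2$ after the learning-rate and momentum conditions are used to make all the competing $\|\nabla f\|_2^2$ and $\|\bm{m}\|_2^2$ coefficients sign-definite. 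Dividing by $\tfrac{\delta\eta(T+1)}{8}$ yields the three stated terms: the $8\Delta/(\delta\eta(T+1))$ initialization term, the $8\Delta_1/(\delta(1-\beta_1)(T+1))$ momentum-initialization term, and the $24(1-\beta_1)\sigma^2/(\delta n)$ steady-state noise term.

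The main obstacle I anticipate is the bookkeeping in the coupled recursion between $\mathbb{E}\|\bm{e}^{(t)}\|_2^2$ and $\mathbb{E}\|\bm{m}^{(t)}\|_2^2$: one has to choose the potential weight $c$ and verify that the momentum window $\beta_1 \in (1 - \delta/(24-12\delta), 1)$ and the learning-rate cap $\eta \le \min\{1/(2L), \sqrt{\delta(1-\beta_1)^2/(8L^2)}\}$ are exactly what is needed to make every cross term absorbable — in particular the $L^2\eta^2$ terms generated by bounding consecutive gradient differences through the momentum drift. Because the gradient-error assumptions are stated in relative ($(1-\delta)$-multiplicative) form, a naive Young's-inequality split can lose the contraction; the delicate point is to split so that the surviving $\|\nabla f(\bm{w}^{(t)})\|_2^2$ coefficient stays positive, which is precisely why the factor $\delta$ (not $\delta^2$ or a constant) appears throughout and why the momentum threshold depends on $\delta$. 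I expect this to be a careful but ultimately routine constant-chasing once the potential function is chosen correctly.
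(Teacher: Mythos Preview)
Your proposal is correct and follows essentially the same route as the paper: a descent lemma producing the $-\tfrac{\eta}{2}\|\nabla f(\bm{w}^{(t)})\|_2^2 + \tfrac{\eta}{2}\|\bm{m}^{(t)}-\nabla f(\bm{w}^{(t)})\|_2^2$ structure, a recursion for the momentum--gradient gap that invokes Assumption~\ref{asp:grad-err} twice (once for the bias, once inside the variance bound together with the $\sigma^2/n$ term from Assumption~\ref{asp:sto-grad}), and a telescoped sum in which the stated $\beta_1$ and $\eta$ conditions make the competing coefficients sign-definite. The only cosmetic difference is packaging---the paper keeps the descent step and the gap recursion as two separate lemmas and combines their summed forms at the end rather than folding them into a single Lyapunov function, and in the descent step it uses the polarization identity (not Young's inequality) to retain a negative $-(\tfrac{1}{2\eta}-\tfrac{L}{2})\|\bm{w}^{(t+1)}-\bm{w}^{(t)}\|_2^2$ term, which is precisely what absorbs the $L^2\eta^2$ drift you flag under the learning-rate cap.
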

\begin{corollary}
    Under Assumptions \ref{asp:smooth}-\ref{asp:grad-err}, if we choose $\eta=\left(2L+\sqrt{\frac{8L^2}{\delta(1-\beta_1)^2}}\right)^{-1}$ and $\beta_1=1-\left(\frac{24}{\delta}+\sqrt{\frac{\delta^{1/2}(T+1)\sigma^2}{n(L\Delta+\delta\Delta_1)}}\right)^{-1}$, MeCeFO (with momentum SGD) converges as
    \begin{align*}
        \frac{1}{T+1}\sum_{t=0}^T\mathbb{E}[\|\nabla f(\bm{w}^{(t)})\|_2^2]=\mathcal{O}\left(\sqrt{\frac{(L\Delta+\delta\Delta_1)\sigma^2}{\delta^{5/2}n(T+1)}}+\frac{L\Delta+\delta\Delta_1}{\delta^{5/2}(T+1)}\right),
    \end{align*}
    matching standard distributed SGD's convergence rate of $\mathcal{O}\left(\frac{\sigma}{\sqrt{nT}}+\frac{1}{T}\right)$.
\end{corollary}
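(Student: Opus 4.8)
The plan is to treat Theorem~\ref{thm:convergence} as a black box and substitute the prescribed $\eta$ and $\beta_1$ into its three-term bound, then collapse the result into the two advertised rate terms. Before substituting, I would first verify that the chosen parameters are admissible for Theorem~\ref{thm:convergence}. For $\eta=(2L+\sqrt{8L^2/(\delta(1-\beta_1)^2)})^{-1}$ this is immediate from the elementary inequality $(a+b)^{-1}\le\min\{a^{-1},b^{-1}\}$ for $a,b>0$, which simultaneously gives $\eta\le 1/(2L)$ and $\eta\le\sqrt{\delta(1-\beta_1)^2/(8L^2)}$. For $\beta_1$, writing $1-\beta_1=(24/\delta+B)^{-1}$ with $B:=\sqrt{\delta^{1/2}(T+1)\sigma^2/(n(L\Delta+\delta\Delta_1))}\ge 0$ yields $1-\beta_1\le\delta/24<\delta/(24-12\delta)$, so $\beta_1$ lies in the required interval $(1-\delta/(24-12\delta),1)$.

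Next I would perform the substitution. Using $1/(1-\beta_1)=24/\delta+B$ and $1/\eta=2L+2\sqrt{2}\,L(24/\delta+B)/\sqrt{\delta}$, each of the three terms of Theorem~\ref{thm:convergence} splits into a part proportional to $B$ and a part free of $B$. Since $B\propto\sqrt{T+1}$, every $B$-proportional contribution, once divided by the $(T+1)$ in the first two terms (or appearing as $1/B$ in the third term, via $24(1-\beta_1)\sigma^2/(\delta n)\le 24\sigma^2/(\delta nB)$), scales as $1/\sqrt{T+1}$, while every $B$-free contribution scales as $1/(T+1)$. Grouping accordingly produces exactly a stochastic term of order $1/\sqrt{n(T+1)}$ and an optimization term of order $1/(T+1)$, which is the structure both of the claimed bound and of standard distributed SGD. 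This is where the specific form of $\beta_1$ does its work: the additive $24/\delta$ enforces admissibility, while $B$ is precisely the value that balances the $1/(1-\beta_1)$ growth of the second term against the $(1-\beta_1)$ decay of the third.

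The remaining work is the algebraic bookkeeping that pins down the constants and powers of $\delta$. For the $1/\sqrt{T+1}$ part I would use $L\Delta\le L\Delta+\delta\Delta_1$ and $\delta\Delta_1\le L\Delta+\delta\Delta_1$ to rewrite each stochastic contribution through the common factor $\sqrt{L\Delta+\delta\Delta_1}$; for instance the third term becomes $24\sigma\sqrt{L\Delta+\delta\Delta_1}/(\delta^{5/4}\sqrt{n(T+1)})$, i.e.\ precisely $24\sqrt{(L\Delta+\delta\Delta_1)\sigma^2/(\delta^{5/2}n(T+1))}$. The $B$-proportional pieces of the first two terms simplify the same way, and the $B$-free pieces (namely $16L\Delta/(\delta(T+1))$, $384\sqrt{2}\,L\Delta/(\delta^{5/2}(T+1))$, and $192\Delta_1/(\delta^2(T+1))$) collapse, via $\delta\le 1$, into a single $(L\Delta+\delta\Delta_1)/(\delta^{5/2}(T+1))$ term up to universal constants. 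I expect the main obstacle to be exactly this power-of-$\delta$ bookkeeping: the $\Delta_1$-driven pieces carry a slightly larger inverse power of $\delta$ than the $L\Delta$-driven pieces, so matching the clean $\delta^{5/2}$ exponent in the statement requires either absorbing the residual $\delta^{-1/2}$ into the constants of the $\mathcal{O}(\cdot)$ (reading the big-$\mathcal{O}$ as tracking the $n$ and $T$ dependence, which is the intended comparison with distributed SGD) or quoting the dominant $L\Delta$ exponent. Either way, treating $\Delta,\Delta_1,L,\delta$ as fixed leaves exactly the rate $\mathcal{O}(\sigma/\sqrt{nT}+1/T)$, completing the corollary.
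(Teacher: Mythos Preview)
Your plan is correct and is essentially the only route: the paper states the corollary immediately after Theorem~\ref{thm:convergence} without a separate proof, so substituting the prescribed $\eta$ and $\beta_1$ into the three-term bound and regrouping is exactly what the authors intend. Your admissibility checks, the split into $B$-proportional and $B$-free pieces, and especially your honest flag about the $\delta$-exponent bookkeeping on the $\Delta_1$-driven terms (where the raw calculation gives $\delta^{-7/2}$ or $\delta^{-3}$ rather than $\delta^{-5/2}$) are all on point; the cleanest reading is indeed that the $\mathcal{O}(\cdot)$ tracks the $n,T$ dependence so as to match the distributed-SGD rate quoted at the end of the statement.
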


\section{Experimental results}\label{sec:exp}

In this section, we empirically evaluate MeCeFO across various scenarios to assess its training performance. Ablation results and additional details are provided in Appendix~\ref{app:ablation} and \ref{app:exp-specification}.


\subsection{Experimental setup}
\textbf{Cluster setup. }We conducted experiments on a 32-GPU cluster composed of four nodes, each with eight NVIDIA A100 GPUs. Intra-node communication leveraged NVLink (600 GB/s), and inter-node communication used InfiniBand (200 GB/s).

\textbf{Baselines.}
In our experiments, we compare MeCeFO against two state-of-the-art fault-tolerant training methods, Bamboo \cite{thorpe2023bamboo} and Oobleck \cite{jang2023oobleck}, using their publicly available implementations.



\textbf{Workloads. }
We pre-train LLaMA \cite{touvron2023llama} models of various sizes on the C4 \cite{raffel2020exploring} dataset, using different global batch sizes and training iterations for each configuration. Specifically,
\begin{itemize}[topsep=5pt, leftmargin=*]
    \item \textbf{LLaMA-350M:} Trained for 6,000 iterations with a global batch size of 8,192.
    \item \textbf{LLaMA-1B:} Trained for 20,000 iterations with a global batch size of 4,096.
    \item \textbf{LLaMA-7B:} Trained for 60,000 iterations with a global batch size of 1,024.
\end{itemize}


\textbf{Failure Scenario.}
We simulate three distinct failure scenarios, each defined by a specific failure frequency and recovery time of corresponding nodes. The detailed configurations are summarized in Table~\ref{tab:failure_scenarios}. These scenarios impose varying levels of stress on the system, from stable (low-frequency) to moderately disrupted (medium-frequency), and highly volatile (high-frequency). This design allows us to evaluate the system’s fault tolerance and recovery behavior under different failure intensities.



\subsection{Training throughput under failures}
\label{sec:Training_throughput_under_failures}

Comparisons across frameworks under fault-free conditions and varying fault frequencies highlight the strong performance of MeCeFO. The throughput of each method is summarized in Table~\ref{tab:throughput_comparison}. 

In the fault-free setting, MeCeFO maintains a throughput of 1,199.23k tokens/s with LLaMA-350M, dropping only slightly to 1,186k tokens/s under high-frequency faults—a mere 1.07\% degradation. Similar robustness is observed for LLaMA-1B (2.98\% degradation) and LLaMA-7B (4.18\%). 




In contrast, Bamboo experiences a $2.76\times$ to $13.9\times$ throughput drop compared to MeCeFO across different settings. Due to its reliance on redundant computation, Bamboo also suffers from low throughput even under fault-free conditions. For instance, in LLaMA-350M pre-training, it achieves only 438.06k tokens/s---substantially lower than both Oobleck (703.73k tokens/s) and MeCeFO (1199.23k tokens/s). As a result, while Bamboo’s relative throughput degradation under faults may appear modest, its heavy resource overhead fundamentally limits overall performance.


Oobleck, focused on system-level optimizations, exhibits significant throughput degradation as fault frequency increases, ranging from $3.71\times$ to $28.0\times$ worse than MeCeFO. For LLaMA-350M, the degradation reaches 10.14\% under high-frequency faults and escalates to 28.09\% for LLaMA-7B.

These results support our perspective that strictly adhering to conventional optimization algorithms in fault-tolerant training can be unnecessarily restrictive. By relaxing this constraint and incorporating memory- and computation-efficient learning techniques, it is possible to significantly enhance training efficiency,  highlighting that efficient fault-tolerant design goes beyond purely system-level solutions.

\begin{table*}[t]
\centering
\caption{\small Throughput Performance and Degradation under Different Fault Frequencies}
\label{tab:throughput_comparison}

\resizebox{\linewidth}{!}{
\begin{tabular}{@{} l l 
                    S[table-format=4.2] 
                    S[table-format=4.2] 
                    S[table-format=4.2] 
                    S[table-format=4.2] 
                    S[table-format=2.2] 
                    S[table-format=2.2] 
                    S[table-format=2.2] @{}}
\toprule
\multirow{2}{*}{Model} & \multirow{2}{*}{System} & \multicolumn{4}{c}{Throughput (tokens/s)} & \multicolumn{3}{c}{Throughput Drop (\%)} \\
\cmidrule(lr){3-6} \cmidrule(lr){7-9} 

& & {{No\ Fault}} & {{Low\ Freq.}} & {{Mid\ Freq.}} & {{High\ Freq.}} & {{Low\ Freq.}} & {{Mid\ Freq.}} & {{High\ Freq.}} \\
\midrule

\multirow{3}{*}{LLaMA-350M} & Bamboo  & 438.06k & 428.90k & 421.45k & 407.22k & \multicolumn{1}{c}{\ \ 2.09} & \multicolumn{1}{c}{\ \ 3.79} & \multicolumn{1}{c}{\ \ 7.04} \\
                            & Oobleck & 703.73k & 674.15k & 662.93k & 632.40k & \multicolumn{1}{c}{\ \ 4.20} & \multicolumn{1}{c}{\ \ 5.80} & \multicolumn{1}{c}{10.14} \\
                            & \textbf{MeCeFO}   & \ \ \textbf{1199.23k}& \ \ \textbf{1197.39k}& \ \ \textbf{1193.25k}& \ \ \textbf{1186.35k}&  \multicolumn{1}{c}{\textbf{\ \  0.15}} & \multicolumn{1}{c}{\textbf{\ \ 0.50}} & \multicolumn{1}{c}{\textbf{\ \ 1.07}} \\
\midrule

\multirow{3}{*}{LLaMA-1B}   & Bamboo  & 153.75k & 146.91k & 144.66k & 141.13k & \multicolumn{1}{c}{\ \ 4.45} & \multicolumn{1}{c}{\ \ 5.91} & \multicolumn{1}{c}{\ \ 8.21} \\
                            & Oobleck & 291.05k & 276.05k & 268.29k & 250.68k & \multicolumn{1}{c}{\ \ 5.16} & \multicolumn{1}{c}{\ \ 7.82} & \multicolumn{1}{c}{13.87} \\
                            & \textbf{MeCeFO}  & \ \ \ \ \textbf{471.19k} & \ \ \ \ \textbf{464.79k} & \ \ \ \ \textbf{461.23k} & \ \ \ \ \textbf{457.13k} & \multicolumn{1}{c}{\ \ \textbf{1.36}} & \multicolumn{1}{c}{\ \ \textbf{2.11}} & \multicolumn{1}{c}{\ \ \textbf{2.98}} \\
\midrule

\multirow{3}{*}{LLaMA-7B}   & Bamboo  & 12.41k  & 11.45k  & 10.74k  & 9.82k   & \multicolumn{1}{c}{\ \ 7.73} & \multicolumn{1}{c}{13.42} &\multicolumn{1}{c}{20.84} \\
                            & Oobleck & 66.95k  & 57.05k  & 51.63k  & 48.14k  & \multicolumn{1}{c}{14.78}& \multicolumn{1}{c}{22.87} & \multicolumn{1}{c}{28.09} \\
                            & \textbf{MeCeFO}   & \ \ \ \ \textbf{111.12k} & \ \ \ \ \textbf{108.15k} & \ \ \ \ \textbf{107.70k} & \ \ \ \ \textbf{106.47k} & \multicolumn{1}{c}{\ \ \textbf{2.67}} & \multicolumn{1}{c}{\ \ \textbf{3.08}} & \multicolumn{1}{c}{\ \ \textbf{4.18}} \\
\bottomrule
\end{tabular}
}
\end{table*}

\begin{table}[t]
\centering
\caption{\small Validation Perplexities of LLaMA Models Pre-trained by MeCeFO under Different Fault Frequencies}
\label{tab:llama_fault_metrics}
\resizebox{\linewidth}{!}{
\begin{tabular}{lcccc}
\toprule
Model      & No Fault & Low-frequency Fault & Medium-frequency Fault & High-frequency Fault \\
\midrule
LLaMA-350M & 18.74 & 18.75            & 18.88             & 19.04          \\
LLaMA-1B   & 15.49 & 15.51            & 15.61               & 15.83             \\
LLaMA-7B   & 14.92    & 14.97               & 15.04                  & 15.16                \\
\bottomrule
\end{tabular}
}
\end{table}

\subsection{Training performance under failures}
\label{sec:Training_perplexity_under_failures}

To evaluate MeCeFO's impact on training convergence, we measured the validation perplexity of LLaMA-350M, LLaMA-1B, and LLaMA-7B trained with MeCeFO under different failure scenarios. {\color{black} To further assess downstream capabilities, we evaluated LLaMA-1B models pre-trained with MeCeFO on several zero-shot tasks and conducted fine-tuning experiments on the GLUE \cite{wang2018glue} benchmark under corresponding failure scenarios.}

{\color{black}\textbf{Pre-training performance. }}As shown in Table~\ref{tab:llama_fault_metrics}, the increase in perplexity caused by MeCeFO’s efficient training strategies under failure conditions is minimal. Under high-frequency faults, the perplexity for LLaMA-350M increases slightly from 18.74 to 19.04 ($1.60\%$); for LLaMA-1B, from 15.49 to 15.83 ($2.19\%$); and for LLaMA-7B, from 14.92 to 15.16 ($1.61\%$). Under medium- and low-frequency fault scenarios, the increases are even smaller---less than $0.80\%$ and $0.34\%$, respectively.

{\color{black}\textbf{Zero-shot performance. }As shown in Table~\ref{tab:llama_zero_shot}, the pre-trained LLaMA-1B models maintain robust zero-shot performance across all failure scenarios. Compared to the fault-free baseline (0.537 average), the average scores are 0.544 under low-frequency faults, 0.530 under mid-frequency faults, and 0.536 under high-frequency faults. Notably, under low-frequency faults, MeCeFO even yields slight improvements on BoolQ (0.594 vs. 0.579) and TruthfulQA-MC2 (0.451 vs. 0.427), leading to the highest overall average. These results demonstrate that MeCeFO preserves, and in some cases enhances, the downstream generalization ability of the model despite frequent failures.}

{\color{black} \textbf{Fine-tuning performance. }As shown in Table~\ref{tab:llama_glue}, LLaMA-1B models pre-trained with MeCeFO under different failure scenarios achieve downstream performance on GLUE that is nearly identical to the fault-free baseline. The average score of the baseline model (80.06) is well preserved: 80.03 under low-frequency faults, 80.13 under mid-frequency faults, and 79.99 under high-frequency faults. In particular, the mid-frequency fault model slightly surpasses the baseline on CoLA (47.21 vs. 46.93) and RTE (63.18 vs. 62.61), leading to the highest overall average. 
}



These findings confirm that MeCeFO effectively maintains training performance. Its ability to sustain comparable perplexity metrics even under high-frequency fault conditions demonstrates robust fault tolerance without significant compromise to final model quality.

{\color{black}
\begin{table}[t]
\centering
\caption{\small Zero-shot evaluation scores of LLaMA-1B Pre-trained by MeCeFO under Different Fault Frequencies}
\label{tab:llama_zero_shot}
\begin{tabular}{lccccc}
\toprule
Fault Frequencies & BoolQ\cite{clark2019boolq} & ARC-Easy \cite{clark2018think} & PIQA \cite{bisk2020piqa} & TruthfulQA-MC2 \cite{lin2021truthfulqa} & Avg.\\
\midrule
No Fault & 0.579 & \textbf{0.459} & 0.682 & 0.427 & 0.537\\
Low Freq. & \textbf{0.594} & 0.455 & 0.674 & \textbf{0.451} & \textbf{0.544}\\
Mid Freq. & 0.571 & 0.446 & 0.678 & 0.425 & 0.530\\
High Freq. & 0.587 & 0.454 & \textbf{0.684} & 0.417 & 0.536\\
\bottomrule
\end{tabular}
\end{table}
}
{\color{black}
\begin{table}[t]
\centering
\caption{\small Fine-tuning Results on Pre-trained LLaMA-1B under Corresponding Fault Frequencies}
\label{tab:llama_glue}
\resizebox{\linewidth}{!}{
\begin{tabular}{lccccccccc}
\toprule
Fault Frequencies & CoLA & STS-B & MRPC & RTE & SST2 & MNLI & QNLI & QQP & Avg.\\
\midrule
No Fault & 46.93 & \textbf{89.21} & \textbf{89.12} & 62.61 & \textbf{92.36} & \textbf{81.82} & 88.61 & 89.83 & 80.06\\
Low Freq. & 46.86 & 89.14 & 88.92 & 62.59 & 92.31 & 81.78 & 88.58 & \textbf{90.07} & 80.03\\
Mid Freq. & \textbf{47.21} & 89.14 & 88.84 & \textbf{63.18} & 92.25 & 81.80 & 88.61 & 90.02 & \textbf{80.13}\\
High Freq. & 46.67 & 89.16 & 88.87 & 62.58 & 92.30 & 81.71 & \textbf{88.66} & 89.94 & 79.99\\
\bottomrule
\end{tabular}
}
\end{table}
}

\section{Conclusions and limitations}\label{sec:conclusion}

We propose MeCeFO, a fault-tolerant training algorithm that achieves high efficiency through three core techniques: (i) skip-connection, (ii) selective activation recomputation, and (iii) low-rank gradient approximation. Theoretically, MeCeFO retains a convergence rate of $\mathcal{O}(1/\sqrt{nT})$, matching that of standard distributed SGD. Empirically, MeCeFO incurs only a 4.18\% throughput degradation when pre-training LLaMA-7B under high-frequency failures while maintaining comparable model performance. In contrast, existing SOTA methods that strictly adhere to fault-free assumptions suffer $5.0\times$ to $6.7\times$ greater throughput degradation. 
Our study has several limitations, including the use of a per-iteration failure setting, limited access to large-scale fault-prone clusters for experiments, and the reliance of our theoretical results on Assumption \ref{asp:grad-err}, which we plan to address in future work.

\begin{ack}
    This work is supported by the National Key Research and Development Program of China (No. 2024YFA1012902) and National Natural Science Foundation of China (No. 124B2017, 92370121, 12301392, W2441021).
\end{ack}
\bibliographystyle{plain}
\bibliography{main}








\appendix

\section{Proof of Theorem \ref{thm:convergence}}\label{app:pfthm}
First, we specify the update rules of MeCeFO with momentum SGD as follows:
\begin{align*}
    \bm{m}^{(t)}=&\beta_1\bm{m}^{(t-1)}+(1-\beta_1)\overline{\bm{g}}^{(t)},\\
    \bm{w}^{(t+1)}=&\bm{w}^{(t)}-\eta\bm{m}^{(t)},
\end{align*}
where $\overline{\bm{g}}^{(t)}$ is MeCeFO's averaged weight gradient, $\bm{m}^{(-1)}=\bm{0}$, $\beta_1\in(0,1)$ is the momentum parameter, $\eta>0$ is the learning rate.

Next, we present several key lemmas.
\begin{lemma}[Descent lemma]\label{lm:descent}
    Under Assumption \ref{asp:smooth}, it holds that
    \begin{align}
        f(\bm{w}^{(t+1)})\le&f(\bm{w}^{(t)})-\left(\frac{1}{2\eta}-\frac{L}{2}\right)\|\bm{w}^{(t+1)}-\bm{w}^{(t)}\|_2^2+\frac{\eta}{2}\|\nabla f(\bm{w}^{(t)})-\bm{m}^{(t)}\|_2^2\nonumber\\
        &-\frac{\eta}{2}\|\nabla f(\bm{w}^{(t)})\|_2^2.\label{eq:lm-descent-1}
    \end{align}
\end{lemma}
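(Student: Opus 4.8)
The statement to prove is the Descent Lemma (Lemma~\ref{lm:descent}), which is a standard first step in analyzing momentum SGD under $L$-smoothness. Let me sketch the proof.

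\textbf{Proof plan.} The plan is to start from the $L$-smoothness inequality applied to consecutive iterates $\bm{w}^{(t)}$ and $\bm{w}^{(t+1)}$, which gives
\begin{align*}
    f(\bm{w}^{(t+1)})\le f(\bm{w}^{(t)})+\langle\nabla f(\bm{w}^{(t)}),\bm{w}^{(t+1)}-\bm{w}^{(t)}\rangle+\frac{L}{2}\|\bm{w}^{(t+1)}-\bm{w}^{(t)}\|_2^2.
\end{align*}
The key difficulty in momentum methods is that $\bm{w}^{(t+1)}-\bm{w}^{(t)}=-\eta\bm{m}^{(t)}$ is the momentum direction, not the gradient direction, so the inner product term $\langle\nabla f(\bm{w}^{(t)}),\bm{w}^{(t+1)}-\bm{w}^{(t)}\rangle$ is not manifestly negative. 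The standard trick is to write $\nabla f(\bm{w}^{(t)})=\bm{m}^{(t)}+(\nabla f(\bm{w}^{(t)})-\bm{m}^{(t)})$ and split the inner product accordingly.

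\textbf{Key steps.} First I would substitute $\bm{w}^{(t+1)}-\bm{w}^{(t)}=-\eta\bm{m}^{(t)}$ into the inner product and decompose $\nabla f(\bm{w}^{(t)}) = \bm{m}^{(t)} + (\nabla f(\bm{w}^{(t)})-\bm{m}^{(t)})$, yielding
\begin{align*}
    \langle\nabla f(\bm{w}^{(t)}),\bm{w}^{(t+1)}-\bm{w}^{(t)}\rangle = -\tfrac{1}{\eta}\|\bm{w}^{(t+1)}-\bm{w}^{(t)}\|_2^2 + \langle\nabla f(\bm{w}^{(t)})-\bm{m}^{(t)},\bm{w}^{(t+1)}-\bm{w}^{(t)}\rangle.
\end{align*}
Next I would bound the cross term using Young's inequality in the form $\langle\bm{a},\bm{b}\rangle\le\frac{\eta}{2}\|\bm{a}\|_2^2+\frac{1}{2\eta}\|\bm{b}\|_2^2$ with $\bm{a}=\nabla f(\bm{w}^{(t)})-\bm{m}^{(t)}$ and $\bm{b}=\bm{w}^{(t+1)}-\bm{w}^{(t)}$; this produces the $\frac{\eta}{2}\|\nabla f(\bm{w}^{(t)})-\bm{m}^{(t)}\|_2^2$ term and another $\frac{1}{2\eta}\|\bm{w}^{(t+1)}-\bm{w}^{(t)}\|_2^2$. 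However, this naive split loses the negative $-\frac{\eta}{2}\|\nabla f(\bm{w}^{(t)})\|_2^2$ term that appears in the claim. To recover it, I would instead use a sharper decomposition: apply the identity $\langle\bm{a},\bm{b}\rangle = \frac{1}{2}\|\bm{a}\|_2^2+\frac{1}{2}\|\bm{b}\|_2^2-\frac{1}{2}\|\bm{a}-\bm{b}\|_2^2$ (or equivalently $-\langle\bm{a},\bm{b}\rangle=\frac12\|\bm a-\bm b\|_2^2-\frac12\|\bm a\|_2^2-\frac12\|\bm b\|_2^2$) to the term $\eta\langle\nabla f(\bm{w}^{(t)}),\bm{m}^{(t)}\rangle$ with $\bm{a}=\nabla f(\bm{w}^{(t)})$, $\bm{b}=\bm{m}^{(t)}$: this gives exactly $-\frac{\eta}{2}\|\nabla f(\bm{w}^{(t)})\|_2^2-\frac{\eta}{2}\|\bm{m}^{(t)}\|_2^2+\frac{\eta}{2}\|\nabla f(\bm{w}^{(t)})-\bm{m}^{(t)}\|_2^2$. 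Recognizing $\frac{\eta}{2}\|\bm{m}^{(t)}\|_2^2=\frac{1}{2\eta}\|\bm{w}^{(t+1)}-\bm{w}^{(t)}\|_2^2$, and combining this with the $\frac{L}{2}\|\bm{w}^{(t+1)}-\bm{w}^{(t)}\|_2^2$ from smoothness, collects the coefficient of $\|\bm{w}^{(t+1)}-\bm{w}^{(t)}\|_2^2$ into $-\left(\frac{1}{2\eta}-\frac{L}{2}\right)$, matching \eqref{eq:lm-descent-1}.

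\textbf{Main obstacle.} The only subtlety is choosing the right algebraic identity so that all three target terms — the $-\left(\frac{1}{2\eta}-\frac{L}{2}\right)\|\bm{w}^{(t+1)}-\bm{w}^{(t)}\|_2^2$ contraction, the $\frac{\eta}{2}\|\nabla f(\bm{w}^{(t)})-\bm{m}^{(t)}\|_2^2$ momentum-error term, and the $-\frac{\eta}{2}\|\nabla f(\bm{w}^{(t)})\|_2^2$ descent term — emerge simultaneously. Using Young's inequality instead of the exact polarization identity $-\langle\bm a,\bm b\rangle=\frac12\|\bm a-\bm b\|_2^2-\frac12\|\bm a\|_2^2-\frac12\|\bm b\|_2^2$ would only yield an inequality without the clean $-\frac{\eta}{2}\|\nabla f(\bm{w}^{(t)})\|_2^2$ term, so the exact identity (which holds because $\bm{w}^{(t+1)}-\bm{w}^{(t)}=-\eta\bm{m}^{(t)}$ makes $\frac{\eta}{2}\|\bm{m}^{(t)}\|_2^2$ and $\frac{1}{2\eta}\|\bm{w}^{(t+1)}-\bm{w}^{(t)}\|_2^2$ literally equal) is essential. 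Everything else is routine rearrangement, so I do not anticipate a genuine obstacle; the proof is short and purely algebraic once the polarization identity is applied.
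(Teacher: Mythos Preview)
Your proposal is correct and essentially the same as the paper's proof: both start from $L$-smoothness, substitute $\bm{w}^{(t+1)}-\bm{w}^{(t)}=-\eta\bm{m}^{(t)}$, and rewrite the inner product $-\eta\langle\nabla f(\bm{w}^{(t)}),\bm{m}^{(t)}\rangle$ via an exact polarization-type identity (the paper splits $\nabla f(\bm{w}^{(t)})=\tfrac{1}{2}\bm{m}^{(t)}+(\nabla f(\bm{w}^{(t)})-\tfrac{1}{2}\bm{m}^{(t)})$, which is algebraically the same thing). Your initial detour through Young's inequality is unnecessary and you rightly discard it; the final argument you settle on matches the paper exactly.
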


\begin{proof}[Proof of Lemma \ref{lm:descent}]
    By $L$-smoothness of $f$ (Assumption \ref{asp:smooth}), we have
    \begin{align}
        f(\bm{w}^{(t+1)})\le& f(\bm{w}^{(t)})+\langle\nabla f(\bm{w}^{(t)}),\bm{w}^{(t+1)}-\bm{w}^{(t)}\rangle+\frac{L}{2}\|\bm{w}^{(t+1)}-\bm{w}^{(t)}\|_2^2.\label{eq:pflm-descent-1}
    \end{align}
    For the inner product, we have
    \begin{align}
        &\langle\nabla f(\bm{w}^{(t)}),\bm{w}^{(t+1)}-\bm{w}^{(t)}\rangle\nonumber\\
        =&\left\langle\frac{\bm{m}^{(t)}}{2},\bm{w}^{(t+1)}-\bm{w}^{(t)}\right\rangle+\left\langle\nabla f(\bm{w}^{(t)})-\frac{\bm{m}^{(t)}}{2},\bm{w}^{(t+1)}-\bm{w}^{(t)}\right\rangle\nonumber\\
        =&-\frac{1}{2\eta}\|\bm{w}^{(t+1)}-\bm{w}^{(t)}\|_2^2+\frac{\eta}{2}\|\nabla f(\bm{w}^{(t)})-\bm{m}^{(t)}\|_2^2-\frac{\eta}{2}\|\nabla f(\bm{w}^{(t)})\|_2^2,\label{eq:pflm-descent-2}
    \end{align}
    where the last equality uses $\bm{w}^{(t+1)}-\bm{w}^{(t)}=-\eta\bm{m}^{(t)}$. Applying \eqref{eq:pflm-descent-2} to \eqref{eq:pflm-descent-1} yields \eqref{eq:lm-descent-1}.
\end{proof}
\begin{lemma}[Momentum-gradient gap]\label{lm:mom-grad-gap}
    Under Assumptions \ref{asp:smooth}, \ref{asp:sto-grad} and \ref{asp:grad-err}, it holds that 
    \begin{align}
        &\frac{1}{T+1}\sum_{t=0}^T\mathbb{E}[\|\bm{m}^{(t)}-\nabla f(\bm{w}^{(t)})]\nonumber\\
        \le&\frac{2\Delta_1}{(1-\beta_1)(T+1)}+\frac{4L^2}{\delta(1-\beta_1)^2}\cdot\frac{1}{T+1}\sum_{t=1}^{T}\mathbb{E}[\|\bm{w}^{(t)}-\bm{w}^{(t-1)}\|_2^2]\nonumber\\
        &+\left(1-\frac{\delta}{2}\right)(7-6\beta_1)\cdot\frac{1}{T+1}\sum_{t=1}^T\mathbb{E}[\|\nabla f(\bm{w}^{(t)})\|_2^2]+\frac{6(1-\beta_1)\sigma^2}{n},\label{eq:lm-mom-grad-gap-1}
    \end{align}
    where $\Delta_1:=\|\bm{m}^{(0)}-\nabla f(\bm{w}^{(0)})\|_2^2$.
\end{lemma}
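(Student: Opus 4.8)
\textbf{Proof proposal for Lemma \ref{lm:mom-grad-gap}.}

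The plan is to control the momentum-gradient gap $\bm{e}^{(t)} := \bm{m}^{(t)} - \nabla f(\bm{w}^{(t)})$ by deriving a one-step recursion and then summing over $t$. First I would expand the recursion: using $\bm{m}^{(t)} = \beta_1 \bm{m}^{(t-1)} + (1-\beta_1)\overline{\bm{g}}^{(t)}$, write
\begin{align*}
\bm{e}^{(t)} = \beta_1\bm{e}^{(t-1)} + \beta_1\big(\nabla f(\bm{w}^{(t-1)}) - \nabla f(\bm{w}^{(t)})\big) + (1-\beta_1)\big(\overline{\bm{g}}^{(t)} - \nabla f(\bm{w}^{(t)})\big).
\end{align*}
The middle term is bounded by $L$-smoothness (Assumption \ref{asp:smooth}) as $L\|\bm{w}^{(t)}-\bm{w}^{(t-1)}\|_2$, which is where the $\|\bm{w}^{(t)}-\bm{w}^{(t-1)}\|_2^2$ terms on the right-hand side of \eqref{eq:lm-mom-grad-gap-1} will come from. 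The last term is the crux: I would split $\overline{\bm{g}}^{(t)} - \nabla f(\bm{w}^{(t)})$ into a stochastic-noise part $\overline{\bm{g}}^{(t)} - \mathbb{E}_{\xi_i}[\overline{\bm{g}}^{(t)}]$ and a bias part $\mathbb{E}_{\xi_i}[\overline{\bm{g}}^{(t)}] - \nabla f(\bm{w}^{(t)})$. Taking conditional expectation, the cross term with $\bm{e}^{(t-1)}$ and with the smoothness term involving only the noise part vanishes (mean-zero under Assumption \ref{asp:sto-grad}); its variance contributes a term of order $(1-\beta_1)^2\sigma^2/n$ because averaging is over (at least the active set of) $n$ DP ranks. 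The bias part is controlled by the second inequality of Assumption \ref{asp:grad-err}: $\|\mathbb{E}_{\xi_i}[\overline{\bm{g}}^{(t)}] - \nabla f(\bm{w}^{(t)})\|_2^2 \le (1-\delta)\|\nabla f(\bm{w}^{(t)})\|_2^2$, which is what produces the $\|\nabla f(\bm{w}^{(t)})\|_2^2$ terms.

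Next I would apply Young's inequality carefully to the expansion of $\|\bm{e}^{(t)}\|_2^2$ so that the coefficient on $\|\bm{e}^{(t-1)}\|_2^2$ becomes a genuine contraction factor strictly below one — the natural choice is to pull out $\beta_1$ via $\|a+b\|^2 \le \beta_1^{-1}\|a\|^2 \cdot \beta_1 + \ldots$; more precisely use $\|\beta_1 \bm{e}^{(t-1)} + \bm{r}\|_2^2 \le \beta_1\|\bm{e}^{(t-1)}\|_2^2 + \frac{1}{1-\beta_1}\|\bm{r}\|_2^2$ after separating out the mean-zero noise term (which I keep outside the Young step so its cross terms drop under expectation). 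This yields, in expectation,
\begin{align*}
\mathbb{E}\|\bm{e}^{(t)}\|_2^2 \le \beta_1 \mathbb{E}\|\bm{e}^{(t-1)}\|_2^2 + \frac{c_1 L^2}{1-\beta_1}\mathbb{E}\|\bm{w}^{(t)}-\bm{w}^{(t-1)}\|_2^2 + c_2(1-\beta_1)(1-\delta)\mathbb{E}\|\nabla f(\bm{w}^{(t)})\|_2^2 + c_3(1-\beta_1)^2\frac{\sigma^2}{n}
\end{align*}
for absolute constants $c_1,c_2,c_3$. Unrolling this geometric recursion and summing $t=0,\dots,T$ gives a factor $\frac{1}{1-\beta_1}$ from $\sum \beta_1^k$, turning the per-step coefficients into $\frac{c_1 L^2}{(1-\beta_1)^2}$, $\frac{c_2(1-\delta)}{1}$-type (absorbed into the $(1-\delta/2)(7-6\beta_1)$ bound via $(1-\delta)\le 1-\delta/2$ and bookkeeping of the constants), and $c_3(1-\beta_1)\sigma^2/n$ on the noise; the initial condition contributes $\frac{\Delta_1}{1-\beta_1}$ per unit, hence $\frac{2\Delta_1}{(1-\beta_1)(T+1)}$ after dividing by $T+1$. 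Matching the stated constants $4/\delta$, $7-6\beta_1$, and $6$ just requires choosing the Young parameters so the inequalities line up; the $1/\delta$ appears precisely because the contraction room left after the bias term needs $\delta$ to absorb it.

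The main obstacle I anticipate is the bookkeeping of constants so that the contraction argument actually closes: after the Young split, the coefficient multiplying $\|\nabla f(\bm{w}^{(t)})\|_2^2$ must stay below what the descent lemma can later afford to subtract, and this is exactly why Assumption \ref{asp:grad-err}'s factor $(1-\delta)$ (rather than an unrestricted bound) is essential — it is what gives the $(1-\delta/2)$ slack. Getting the geometric-sum reindexing right (the $\|\bm{w}^{(t)}-\bm{w}^{(t-1)}\|_2^2$ sum runs from $t=1$ to $T$, not $0$ to $T$, since $\bm{w}^{(-1)}$ is not defined) and ensuring the mean-zero noise terms are peeled off \emph{before} any Young inequality (so their cross terms vanish cleanly in expectation rather than inflating the variance) are the two places where a careless step would break the bound. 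Everything else is routine expansion and application of Assumptions \ref{asp:smooth}--\ref{asp:grad-err}.
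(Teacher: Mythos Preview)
Your overall plan—derive a one-step contractive recursion for $\|\bm{e}^{(t)}\|_2^2$ and sum—matches the paper's, and your expansion of $\bm{e}^{(t)}$ is correct. (The paper first applies Jensen to the convex combination $\beta_1(\bm{m}^{(t-1)}-\nabla f(\bm{w}^{(t)}))+(1-\beta_1)(\cdot)$ and then Young's inequality to shift $\nabla f(\bm{w}^{(t)})$ to $\nabla f(\bm{w}^{(t-1)})$, whereas you propose a direct Young split on $\|\beta_1\bm{e}^{(t-1)}+\bm{r}\|_2^2$; either route yields a recursion of the same form.)

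There is, however, a real gap in the variance step. You assert that the noise part $\overline{\bm{g}}^{(t)}-\mathbb{E}[\overline{\bm{g}}^{(t)}]$ contributes $O((1-\beta_1)^2\sigma^2/n)$ ``because averaging is over \ldots\ $n$ DP ranks,'' citing Assumption~\ref{asp:sto-grad}. But Assumption~\ref{asp:sto-grad} bounds the variance of the \emph{fault-free} stochastic gradient $\nabla F(\bm{w};\xi)$, hence of $\overline{\bm{g}}_\star^{(t)}$; it says nothing about MeCeFO's approximated gradient $\overline{\bm{g}}^{(t)}$, which additionally carries skip-connection and low-rank errors. To control $\mathbb{E}\|\overline{\bm{g}}^{(t)}-\mathbb{E}[\overline{\bm{g}}^{(t)}]\|_2^2$ you must route through the \emph{first} inequality of Assumption~\ref{asp:grad-err}, which you never invoke. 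The paper does this via the three-term split
\[
\mathbb{E}\bigl\|\overline{\bm{g}}^{(t)}-\mathbb{E}[\overline{\bm{g}}^{(t)}]\bigr\|_2^2 \;\le\; 3\,\mathbb{E}\bigl\|\overline{\bm{g}}^{(t)}-\overline{\bm{g}}_\star^{(t)}\bigr\|_2^2 + 3\,\mathbb{E}\bigl\|\overline{\bm{g}}_\star^{(t)}-\nabla f(\bm{w}^{(t)})\bigr\|_2^2 + 3\,\bigl\|\nabla f(\bm{w}^{(t)})-\mathbb{E}[\overline{\bm{g}}^{(t)}]\bigr\|_2^2,
\]
obtaining $6(1-\delta)\,\mathbb{E}\|\nabla f(\bm{w}^{(t)})\|_2^2+(6-3\delta)\sigma^2/n$ rather than just $\sigma^2/n$. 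This extra $\|\nabla f\|_2^2$ contribution from the variance, weighted by $(1-\beta_1)^2$, adds to the $(1-\delta)(1-\beta_1)\|\nabla f\|_2^2$ bias term and gives $(1-\delta)(1-\beta_1)\bigl[1+6(1-\beta_1)\bigr]=(1-\delta)(1-\beta_1)(7-6\beta_1)$. In other words, the factor $(7-6\beta_1)$ in \eqref{eq:lm-mom-grad-gap-1} is \emph{not} a Young-parameter tuning artifact as you suggest; it is structural, arising because the approximation error inflates the variance of $\overline{\bm{g}}^{(t)}$ by a gradient-dependent amount. Once you insert this step the rest of your outline goes through, but as written the variance coefficient $c_3(1-\beta_1)^2\sigma^2/n$ is unjustified and the constant-matching cannot close.
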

\begin{proof}[Proof of Lemma \ref{lm:mom-grad-gap}]
    According to the update rules, we have
    \begin{align}
        \mathbb{E}[\|\bm{m}^{(t)}-\nabla f(\bm{w}^{(t)})\|_2^2]=&\mathbb{E}[\|\beta_1(\bm{m}^{(t-1)}-\nabla f(\bm{w}^{(t)}))+(1-\beta_1)(\overline{\bm{g}}^{(t)}-\nabla f(\bm{w}^{(t)}))\|_2^2]\nonumber\\
        =&\mathbb{E}[\|\beta_1(\bm{m}^{(t-1)}-\nabla f(\bm{w}^{(t)}))+(1-\beta_1)(\mathbb{E}[\overline{\bm{g}}^{(t)}]-\nabla f(\bm{w}^{(t)}))\|_2^2]\nonumber\\
        &+(1-\beta_1)^2\mathbb{E}[\|\overline{\bm{g}}^{(t)}-\mathbb{E}[\overline{\bm{g}}^{(t)}]\|_2^2].\label{eq:pflm-mom-grad-gap-1}
    \end{align}
    For the first term, applying Jensen's inequality yields
    \begin{align}
        &\mathbb{E}[\|\beta_1(\bm{m}^{(t-1)}-\nabla f(\bm{w}^{(t)}))+(1-\beta_1)(\mathbb{E}[\overline{\bm{g}}^{(t)}]-\nabla f(\bm{w}^{(t)}))\|_2^2]\nonumber\\
        \le&\beta_1\mathbb{E}[\|\bm{m}^{(t-1)}-\nabla f(\bm{w}^{(t)})\|_2^2]+(1-\beta_1)\mathbb{E}[\|\mathbb{E}[\overline{\bm{g}}^{(t)}]-\nabla f(\bm{w}^{(t)})\|_2^2]\nonumber\\
        \le&\beta_1\mathbb{E}[\|\bm{m}^{(t-1)}-\nabla f(\bm{w}^{(t)})\|_2^2]+(1-\delta)(1-\beta_1)\mathbb{E}[\|\nabla f(\bm{w}^{(t)})\|_2^2],\label{eq:pflm-mom-grad-gap-2}
    \end{align}
    where the last inequality uses Assumption \ref{asp:grad-err}. By Young's inequality, we have
    \begin{align}
        \mathbb{E}[\|\bm{m}^{(t-1)}-\nabla f(\bm{w}^{(t)})\|_2^2]\le&\left(1+\frac{\delta(1-\beta_1)}{2}\right)\mathbb{E}[\|\bm{m}^{(t-1)}-\nabla f(\bm{w}^{(t-1)})\|_2^2]\nonumber\\
        &+\left(1+\frac{2}{\delta(1-\beta_1)}\right)\mathbb{E}[\|\nabla f(\bm{w}^{(t)})-\nabla f(\bm{w}^{(t-1)})\|_2^2].\label{eq:pflm-mom-grad-gap-3}
    \end{align}
    For the second term, applying Cauchy's inequality yields
    \begin{align}
        &\mathbb{E}[\|\overline{\bm{g}}^{(t)}-\mathbb{E}[\overline{\bm{g}}^{(t)}]\|_2^2]\nonumber\\
        \le&3\mathbb{E}[\|\overline{\bm{g}}^{(t)}-\overline{\bm{g}}_\star^{(t)}\|_2^2]+3\mathbb{E}[\|\overline{\bm{g}}_\star^{(t)}-\nabla f(\bm{w}^{(t)})\|_2^2]+3\mathbb{E}[\|\nabla f(\bm{w}^{(t)})-\mathbb{E}[\overline{\bm{g}}^{(t)}]\|_2^2]\nonumber\\
        \le&3(1-\delta)\mathbb{E}[\|\overline{\bm{g}}_\star^{(t)}\|_2^2]+\frac{3\sigma^2}{n}+3(1-\delta)\mathbb{E}[\|\nabla f(\bm{w}^{(t)})\|_2^2]\nonumber\\
        \le&6(1-\delta)\mathbb{E}[\|\nabla f(\bm{w}^{(t)})\|_2^2]+\frac{(6-3\delta)\sigma^2}{n},\label{eq:pflm-mom-grad-gap-4}
    \end{align}
    where the second inequality uses Assumptions \ref{asp:sto-grad} and \ref{asp:grad-err}, the last inequality uses Assumption \ref{asp:sto-grad}. Applying \eqref{eq:pflm-mom-grad-gap-2}\eqref{eq:pflm-mom-grad-gap-3}\eqref{eq:pflm-mom-grad-gap-4} to \eqref{eq:pflm-mom-grad-gap-1} and using Assumption \ref{asp:smooth}, we obtain
    \begin{align}
        &\mathbb{E}[\|\bm{m}^{(t)}-\nabla f(\bm{w}^{(t)})\|_2^2]\nonumber\\
        \le&\left(1-(1-\beta_1)\left(1-\frac{\delta}{2}\right)\right)\mathbb{E}[\|\bm{m}^{(t-1)}-\nabla f(\bm{w}^{(t-1)})\|_2^2]+\frac{2L^2}{\delta(1-\beta_1)}\mathbb{E}[\|\bm{w}^{(t)}-\bm{w}^{(t-1)}\|_2^2]\nonumber\\
        &+(1-\delta)(1-\beta_1)(7-6\beta_1)\mathbb{E}[\|\nabla f(\bm{w}^{(t)})\|_2^2]+\frac{(6-3\delta)(1-\beta_1)^2\sigma^2}{n}.\label{eq:pflm-mom-grad-gap-5}
    \end{align}
    Summing \eqref{eq:pflm-mom-grad-gap-5} from $t=1$ to $T$ yields \eqref{eq:lm-mom-grad-gap-1}.
\end{proof}

Now we are ready to prove Theorem \ref{thm:convergence}. We restate Theorem \ref{thm:convergence} as follows.

\begin{theorem}[Convergence of MeCeFO]\label{thmres:convergence}
    Under Assumptions \ref{asp:smooth}-\ref{asp:grad-err}, if $\beta_1\in(1-\delta/(24-12\delta),1)$ and $\eta\le\min\{1/(2L),\sqrt{(\delta(1-\beta_1)^2)/(8L^2)}\}$, MeCeFO (with momentum SGD) converges as
    \begin{align}
        \frac{1}{T+1}\sum_{t=0}^T\mathbb{E}[\|\nabla f(\bm{w}^{(t)})\|_2^2]\le\frac{8\Delta}{\delta\eta(T+1)}+\frac{8\Delta_1}{\delta(1-\beta_1)(T+1)}+\frac{24(1-\beta_1)\sigma^2}{\delta n},\label{eq:thmres-convergence-1}
    \end{align}
    where $\Delta:=f(\bm{w}^{(0)})-\inf_{\bm{w}}f(\bm{w})$, and $\Delta_1:=\|\bm{m}^{(0)}-\nabla f(\bm{w}^{(0)})\|_2^2$.
\end{theorem}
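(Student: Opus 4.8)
The plan is to combine the descent lemma (Lemma~\ref{lm:descent}) and the momentum--gradient gap bound (Lemma~\ref{lm:mom-grad-gap}) in a telescoping argument. First I would sum the descent inequality \eqref{eq:lm-descent-1} over $t = 0, \ldots, T$. Since $\eta \le 1/(2L)$, the coefficient $\frac{1}{2\eta} - \frac{L}{2}$ on the $\|\bm{w}^{(t+1)} - \bm{w}^{(t)}\|_2^2$ term is at least $\frac{1}{4\eta}$, so after taking expectations we obtain
\begin{align*}
\frac{1}{4\eta}\sum_{t=0}^T \mathbb{E}[\|\bm{w}^{(t+1)} - \bm{w}^{(t)}\|_2^2] + \frac{\eta}{2}\sum_{t=0}^T \mathbb{E}[\|\nabla f(\bm{w}^{(t)})\|_2^2] \le \Delta + \frac{\eta}{2}\sum_{t=0}^T \mathbb{E}[\|\nabla f(\bm{w}^{(t)}) - \bm{m}^{(t)}\|_2^2].
\end{align*}

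Next I would substitute the bound from Lemma~\ref{lm:mom-grad-gap} for the momentum--gap sum $\sum_t \mathbb{E}[\|\bm{m}^{(t)} - \nabla f(\bm{w}^{(t)})\|_2^2]$ on the right-hand side. This introduces three contributions: a term proportional to $\frac{4L^2}{\delta(1-\beta_1)^2}\sum_t \mathbb{E}[\|\bm{w}^{(t)} - \bm{w}^{(t-1)}\|_2^2]$, a term proportional to $(1-\delta/2)(7-6\beta_1)\sum_t \mathbb{E}[\|\nabla f(\bm{w}^{(t)})\|_2^2]$, and the noise floor $\frac{6(1-\beta_1)\sigma^2}{n}$. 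The displacement term is absorbed by the $\frac{1}{4\eta}\sum_t \mathbb{E}[\|\bm{w}^{(t+1)} - \bm{w}^{(t)}\|_2^2]$ already sitting on the left: this requires $\frac{\eta}{2} \cdot \frac{4L^2}{\delta(1-\beta_1)^2} \le \frac{1}{4\eta}$, i.e. $\eta^2 \le \frac{\delta(1-\beta_1)^2}{8L^2}$, which is exactly the second constraint on $\eta$. The gradient-norm term on the right must be strictly dominated by the $\frac{\eta}{2}\sum_t \mathbb{E}[\|\nabla f(\bm{w}^{(t)})\|_2^2]$ on the left; using $\beta_1 > 1 - \delta/(24-12\delta)$ one checks that $(1-\delta/2)(7-6\beta_1) < 1$ with enough slack — concretely, the choice of $\beta_1$ is calibrated so that after moving this term across one retains a coefficient of at least $\frac{\delta\eta}{8}$ (or similar constant fraction) in front of $\sum_t \mathbb{E}[\|\nabla f(\bm{w}^{(t)})\|_2^2]$.

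After rearranging, dividing by $(T+1)$ and by the surviving coefficient, the $\Delta$ term yields $\frac{8\Delta}{\delta\eta(T+1)}$, the initial-momentum term $\Delta_1$ (which enters Lemma~\ref{lm:mom-grad-gap} divided by $(1-\beta_1)$) yields $\frac{8\Delta_1}{\delta(1-\beta_1)(T+1)}$, and the noise floor yields $\frac{24(1-\beta_1)\sigma^2}{\delta n}$, matching \eqref{eq:thmres-convergence-1}. The main obstacle — and the part demanding care — is the bookkeeping of constants: one must verify that the stated ranges for $\beta_1$ and $\eta$ leave a strictly positive residual coefficient on $\sum_t \mathbb{E}[\|\nabla f(\bm{w}^{(t)})\|_2^2]$ after the absorption, and that all the $(1-\delta/2)$, $(7-6\beta_1)$, $(6-3\delta)$ factors collapse into the clean $\delta$-dependent constants $8$, $8$, and $24$. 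The structural steps (telescoping, Young's inequality for the cross term, absorbing the displacement) are routine; pinning down the precise numerical thresholds is where the real work lies.
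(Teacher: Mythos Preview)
Your proposal is correct and follows essentially the same route as the paper's proof: sum the descent lemma, plug in Lemma~\ref{lm:mom-grad-gap}, absorb the displacement sum via the $\eta$ constraints, and move the gradient-norm term across using the $\beta_1$ constraint. The one place where you are vague and the paper is explicit is the key constant computation: the condition $\beta_1 > 1 - \delta/(24-12\delta)$ is chosen precisely so that $(1-\delta/2)(7-6\beta_1) \le 1 - \delta/4$, which leaves exactly the residual coefficient $\frac{\delta\eta}{8}$ and produces the constants $8$, $8$, $24$ on the nose.
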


\begin{proof}[Proof of Theorem \ref{thmres:convergence}]
    Summing \eqref{eq:lm-descent-1} in Lemma \ref{lm:descent} for $t=0,1,\cdots,T$ and taking expectation, we have
    \begin{align}
        \inf_{\bm{w}\in\mathbb{R}^{d}}f(\bm{w})-f(\bm{w}^{(0)})\le&\frac{\eta}{2}\sum_{t=0}^T\mathbb{E}[\|\nabla f(\bm{w}^{(t)})-\bm{m}^{(t)}\|_2^2]-\left(\frac{1}{2\eta}-\frac{L}{2}\right)\sum_{t=0}^T\mathbb{E}[\|\bm{w}^{(t+1)}-\bm{w}^{(t)}\|_2^2]\nonumber\\
        &-\frac{\eta}{2}\sum_{t=0}^T\mathbb{E}[\|\nabla f(\bm{w}^{(t)})\|_2^2].\label{eq:pfthmres-convergence-1}
    \end{align}
    Applying Lemma \ref{lm:mom-grad-gap} to \eqref{eq:pfthmres-convergence-1} and noting that $\beta_1\in(1-\delta/(24-12\delta),1)$ implies $(1-\delta/2)(7-6\beta_1)\le1-\delta/4$, we obtain
    \begin{align}
        \frac{1}{T+1}\sum_{t=0}^T\mathbb{E}[\|\nabla f(\bm{w}^{(t)})\|_2^2]\le&-\frac{8}{\delta\eta}\left(\frac{1}{2\eta}-\frac{L}{2}-\frac{2\eta L}{\delta(1-\beta_1)^2}\right)\frac{1}{T+1}\sum_{t=0}^T\mathbb{E}[\|\bm{w}^{(t+1)}-\bm{w}^{(t)}\|_2^2]\nonumber\\
        &+\frac{8\Delta}{\delta\eta(T+1)}+\frac{8\Delta_1}{\delta(1-\beta_1)(T+1)}+\frac{24(1-\beta_1)\sigma^2}{\delta n}.\label{eq:pfthmres-convergence-2}
    \end{align}
    Noting that $\eta\le\min\{1/(2L),\sqrt{(\delta(1-\beta_1)^2)/(8L^2)}\}$ implies $1/(4\eta)\ge L/2$ and $1/(4\eta)\ge(2\eta L^2)/(\delta(1-\beta_1)^2)$, \eqref{eq:thmres-convergence-1} is a direct result of \eqref{eq:pfthmres-convergence-2}.
\end{proof}

\section{Discussions}

{\color{black}
\textbf{Experimental setup of failure scenarios. }Although the experimental setup assumes uniformly random failures, real-world failure patterns are often asymmetric or localized. From a theoretical perspective, MeCeFO remains robust in such settings, since fallback operations continue to balance data exposure across pipelines without introducing systematic bias. To support this claim, we further simulated persistent failures on a fixed subset of GPUs and observed validation perplexities that closely matched those under uniform random failures (see Appendix~\ref{subapp:asym}). In addition, the failure-to-recovery ratios reported in Table~\ref{tab:failure_scenarios} highlight the increasing challenges of repair and maintenance in larger computing systems; further discussion on the implications of these ratios can be found in Appendix~\ref{subapp:failure_scenario}.

}

{\color{black}
\textbf{Extension to other parallel strategies. }While our main discussions focus on the DP+PP setting, the design of MeCeFO is inherently local to each node, as its three core mechanisms—skip connections, selective activation recomputation, and low-rank gradient approximation—are applied independently at the node level. This locality allows MeCeFO to naturally extend to TP scenarios. In the event of a failed node within a TP group, the workload can be redistributed across sibling TP ranks within the same PP stage, avoiding the recomputation of the entire group. When $|\text{TP}| > 2$, the resulting overhead per node is strictly less than $1\times$, which makes it feasible to adopt conservative fallback strategies for better error control. Furthermore, the mechanisms in MeCeFO are tunable: skip connections may be applied to only a subset of sub-modules, gradient checkpointing can retain additional activations to reduce recomputation depth, and low-rank gradient approximation can employ higher ranks for improved fidelity. These adaptations ensure that MeCeFO remains compatible with TP while providing flexibility in balancing efficiency and accuracy.
}

{\color{black}
\textbf{Transfer potential to soft-error scenarios. }
In addition to hard-fault tolerance, there exists a complementary line of work on mitigating soft errors and stragglers. For instance, replication and redundancy mechanisms have been proposed to prevent undetected computational errors that may corrupt outputs \cite{dutta2019codenet} and to alleviate the performance impact of slow workers in distributed systems \cite{tandon2017gradient}. Our method takes a different perspective by tolerating bounded training errors in exchange for reduced computational redundancy, thereby improving efficiency while preserving robustness to hard faults. We view this perspective as complementary to existing approaches, and it may inspire future extensions of MeCeFO toward soft-error resilience or straggler mitigation.
}

\begin{table}[t] 
  \centering 
  \caption{\small Performance and Memory Usage Comparison of Models with Varying Batch Sizes. "OOM" denotes an Out of Memory error. A hyphen (-) indicates data not available.}
  \label{tab:ablation_comparison}
  \begin{tabular}{@{}l 
                  S[table-format=2.2@{\,k}] 
                  S[table-format=2.2]        
                  S[table-format=2.2@{\,k}] 
                  S[table-format=2.2]        
                  @{}}
    \toprule
    \multirow{3}{*}{Method} & \multicolumn{2}{c}{Batch Size = 256} & \multicolumn{2}{c}{Batch Size = 512} \\
    \cmidrule(lr){2-3} \cmidrule(lr){4-5} 
    & {Throughput} & {Memory} & {Throughput} & {Memory} \\
    & {(tokens/s)} & {(GB)}   & {(tokens/s)} & {(GB)}   \\
    \midrule
    MeCeFOmrl        & \multicolumn{1}{c}{19.11k} & 76.13 & \multicolumn{1}{c}{-}   & OOM \\
    MeCeFOrl         & \multicolumn{1}{c}{30.23k} & 54.65 & \multicolumn{1}{c}{-}   & OOM \\
    MeCeFOl          & \multicolumn{1}{c}{26.41k} & 38.48 & \multicolumn{1}{c}{23.86k}                   & 70.71 \\
    MeCeFO           & \multicolumn{1}{c}{28.06k} & 39.21 & \multicolumn{1}{c}{27.19k}                   & 73.25 \\
    MeCeFO w/o Fault & \multicolumn{1}{c}{28.12k} & 41.52 &\multicolumn{1}{c}{30.04k}                   & 76.05 \\
    \bottomrule
  \end{tabular}
\end{table}

\section{Ablation studies and additional results}
\label{app:ablation}

{\color{black}
\subsection{Ablation on key techniques in MeCeFO}\label{subapp:techs}
}
We conducted ablation experiments to assess the contribution of each technique to training efficiency. These experiments were carried out on a server with 8 A100 GPUs using pipeline parallelism to train the LLaMA-7B model. “MeCeFO w/o Fault” denotes baseline training without node failures, while all other setups involved a single node failure during training. “MeCeFO” refers to the full proposed fault-tolerant algorithm. To evaluate individual components, we designed the following variants:

\begin{itemize}[topsep=5pt, leftmargin=*]
\item \textbf{MeCeFOmrl}: MeCeFO without skip-connection, selective activation recomputation and low-rank gradient approximation (key techniques I, II and III).
\item \textbf{MeCeFOrl}: MeCeFO without selective acivation recomputation and low-rank gradient approximation (key techniques II and III).
\item \textbf{MeCeFOl}: MeCeFO without low-rank gradient approximation (key technique III).
\end{itemize}



According to Table~\ref{tab:ablation_comparison}, removing all techniques (MeCeFOmrl) leads to a sharp increase in memory footprint from 41.52GB to 76.13GB for the neighbor node when resuming training with a batch size of 256. At the same time, throughput drops significantly from 28.12k tokens/s to 19.11k tokens/s. With a batch size of 512, this configuration triggers an \texttt{OOM} (Out of Memory) error. These results indicate that using the NDB strategy alone is impractical.

For the MeCeFOrl variant, an \texttt{OOM} error still occurred at a batch size of 512, indicating that dropping only MHA activations is insufficient to alleviate the memory pressure caused by the doubled workload.

In the MeCeFOl variant, the throughput at a batch size of 256 decreased from 28.12k tokens/s to 26.41k tokens/s. This suggests that although recomputing FFN activations helps reduce memory usage, the added computational overhead negatively impacts throughput.

Finally, the full MeCeFO algorithm, integrating all optimization components, achieved a throughput of 28.06k tokens/s and a memory footprint of 39.21GB under a batch size of 256 in a single-node failure scenario---closely approaching the performance of fault-free training.

These experimental results confirm that each component of the MeCeFO scheme plays a critical role in either memory optimization or computational efficiency. Their synergistic integration enables the system to sustain high throughput and effectively prevent memory overflows, even under fault conditions with reduced computational resources.

{\color{black}
\subsection{Ablation on asymmetric failures}\label{subapp:asym}
We conducted an ablation study simulating persistent non-uniform failures. Specifically, we randomly selected 5 GPUs to fail repeatedly throughout the entire training process, while the remaining GPUs remained fully operational. All other experimental settings were identical to those used in the main study. The validation perplexities are summarized in Table~\ref{tab:asym_failure}, where the asymmetric setting closely matches the symmetric (uniform) failure case.

\begin{table}[t]
\centering
\caption{\small Validation perplexities of LLaMA-1B trained with MeCeFO under asymmetric (static subset) vs. symmetric (uniform random) failures.}
\label{tab:asym_failure}
\begin{tabular}{lcccc}
\toprule
Failure Setting & No Fault & Low Freq. & Mid Freq. & High Freq. \\
\midrule
Asymmetric & 15.49 & 15.54 & 15.62 & 15.75 \\
Symmetric  & 15.49 & 15.51 & 15.61 & 15.83 \\
\bottomrule
\end{tabular}
\end{table}

\begin{table}[t]
    \centering
    \caption{\small Configurations and validation perplexities of LLaMA-1B pre-trained with MeCeFO.}
\label{tab:ratio}
\begin{tabular}{lccc}
\toprule
Scenario Name & Failure Interval & Node Recovery Time & Perplexity \\
\midrule
High Frequency Failure & Every 30 minutes & Every 120 minutes & 15.83 \\
Higher Frequency Failure  & Every 10 minutes & Every 40 minutes & 15.81 \\
\bottomrule
\end{tabular}
\end{table}

The results indicate that even in the presence of persistent and localized failures, MeCeFO maintains robustness without significant degradation in training quality. 

\subsection{Ablation on failure scenarios}\label{subapp:failure_scenario}
In fact, it is the \emph{ratio} between failure and recovery rates—rather than their absolute values—that is more relevant to training performance under failures, as it determines the steady-state proportion of healthy nodes and thus the overall system behavior and algorithmic robustness. To examine this effect, we pre-trained the LLaMA-1B model with MeCeFO under a new failure scenario named \emph{Higher Frequency Failure} where failures occur every 10 minutes and recoveries take 40 minutes, i.e., both events are more frequent while preserving the same ratio as in the high-frequency setting. The resulting validation perplexity was 15.81, which is nearly identical to the 15.83 obtained in the original high-frequency scenario (see Table~\ref{tab:ratio}).

\textbf{Remark. }To further align our experimental setup with realistic large-scale deployments, we intentionally amplify failure and recovery events on a 32-GPU cluster to emulate systems with hundreds or even thousands of nodes. In this abstraction, each simulated GPU corresponds to $N \gg 1$ real nodes, each with a much lower per-node failure or recovery rate. As summarized in Table~\ref{tab:failure_scenarios}, this setup ensures that the equivalent failure frequency per real node remains consistent across different scenarios, while the equivalent recovery frequency per real node decreases, reflecting the increasing difficulty of repair and maintenance in larger-scale clusters. A summary of this mapping between simulated and equivalent real-node clusters is provided in Table~\ref{tab:equiv_fail}.

\begin{table}[t]
    \centering
    \caption{\small Equivalent failure and recovery rates under different scenarios.}
\label{tab:equiv_fail}
\resizebox{\linewidth}{!}{
\begin{tabular}{lccccccc}
\toprule
\multirow{2}{*}{Scenario} & \multirow{2}{*}{Sim. Cluster} & \multicolumn{2}{c}{Freq. Per GPU Per Hour} & \multirow{2}{*}{\#Real Nodes Per GPU} & \multicolumn{2}{c}{Freq. Per Real Node Per Hour} \\
\cmidrule(lr){3-4} \cmidrule(lr){6-7}
& & Fail Freq. & Recov. Freq. & & Fail Freq. & Recov. Freq. \\
\midrule
Low Freq. & 32 GPUs & $1/64$ & $1/128$ & $N$ & $1/(64N)$ & $1/(128N)$\\
Mid Freq. & 32 GPUs & $1/32$ & $1/96$ & $2N$ & $1/(64N)$ & $1/(192N)$\\
High Freq. & 32 GPUs & $1/16$ & $1/64$ & $4N$ & $1/(64N)$ & $1/(256N)$\\
\bottomrule
\end{tabular}
}
\end{table}

\subsection{Results on other model structures}

We further evaluate MeCeFO on a Deepseek V3-style model that integrates Multi-Head Latent Attention (MLA)~\cite{liu2024deepseekv3} and Mixture-of-Experts (MoE)~\cite{shazeer2017outrageously}, with a total of 1.2B parameters and 0.1B active parameters. As shown in Table~\ref{tab:mla_moe}, the validation perplexities of this model trained with MeCeFO remain consistently comparable across all failure scenarios.

\begin{table}[t]
    \centering
    \caption{\small Validation perplexities of MLA+MoE-1.2B pre-trained with MeCeFO.}
\label{tab:mla_moe}
\begin{tabular}{lcccc}
\toprule
Model & No Fault & Low Freq. & Mid Freq. & High Freq. \\
\midrule
MLA+MoE-1.2B & 16.17 & 16.22 & 16.37 & 16.43 \\
\bottomrule
\end{tabular}
\end{table}

\subsection{Results on extended validation of Assumption \ref{asp:grad-err}. }Deeper models may exhibit longer error propagation paths. To assess the generality of our assumption beyond the 1B case, we conducted additional experiments on a 7B model with 32 transformer layers, constrained by our current computational resources. The results shown in Fig.~\ref{fig:asp-single-batch-7B} and \ref{fig:asp-full-batch-7B} reveal similar trends in gradient approximation error, suggesting that Assumption \ref{asp:grad-err} remains valid at larger scales.

\begin{figure}[t]
    \centering
    \begin{minipage}[c]{0.48\textwidth}
        \includegraphics[width=\textwidth]{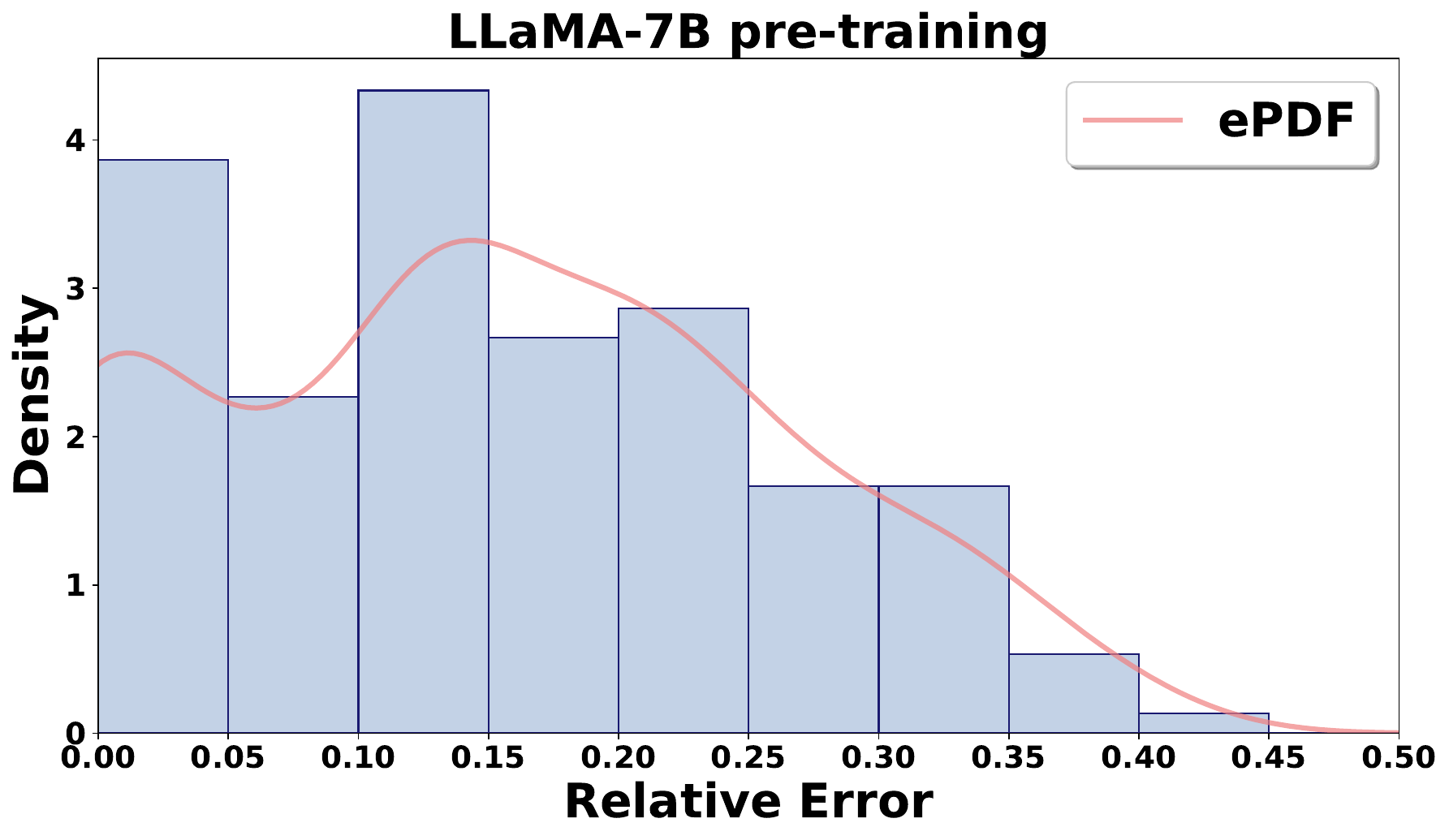}
    \caption{\small Single-batch relative error of pre-training LLaMA-7B on the C4 dataset.}
    \label{fig:asp-single-batch-7B}
    \end{minipage}
    \hspace{2mm}
    \begin{minipage}[c]{0.48\textwidth}
        \includegraphics[width=\textwidth]{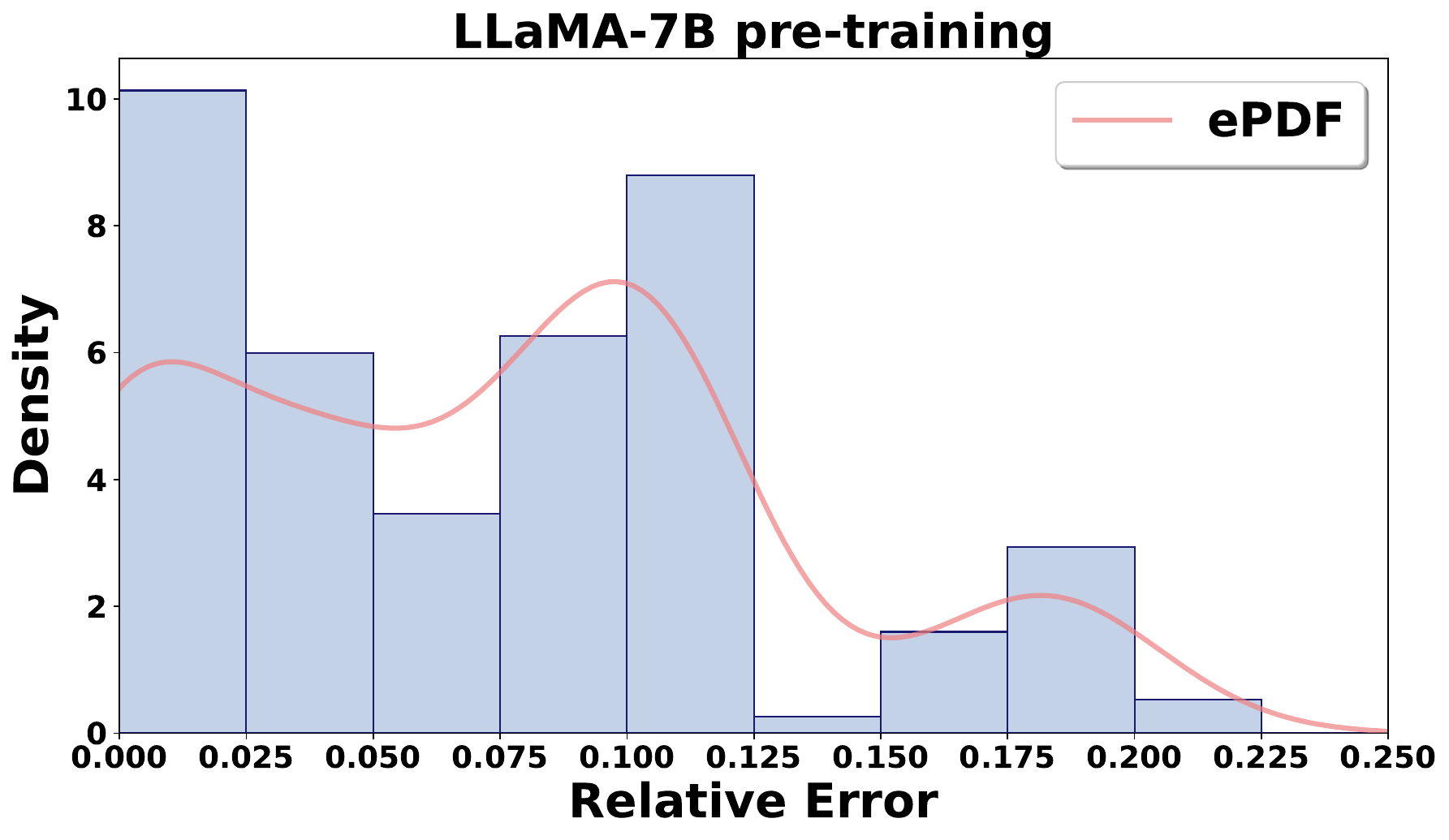}
    \caption{\small Full-batch relative error of pre-training LLaMA-7B on the C4 dataset.}
    \label{fig:asp-full-batch-7B}
    \end{minipage} 
\end{figure}

}

\section{Experimental specifications}\label{app:exp-specification}

This section provides detailed descriptions of our experimental setup, covering algorithm implementation, model specifications, and training configurations.

\textbf{Implementation.} We implement MeCeFO on top of the HexiScale framework \cite{yan2025hexiscaleaccommodatinglargelanguage}, which itself builds upon Megatron-LM \cite{narayanan2021efficient}. 

{\color{black}
\textbf{Parallel strategies. }We use $|\text{DP}|=4$ and $|\text{PP}|=8$ throughout all experiments.
}

\noindent\textbf{Model specifications. }Table~\ref{tab:llama_arch} presents the detailed configurations of LLaMA-350M, LLaMA-1B, and LLaMA-7B, including hidden dimensions, FFN intermediate dimensions, number of attention heads, and layers. A maximum sequence length of 256 is used across all experiments.

\noindent\textbf{Training configurations. }Across all scenarios, we use the AdamW optimizer with $\beta_1 = 0.9$, $\beta_2 = 0.999$, \texttt{weight\_decay} $= 0.01$, and $\epsilon = 1 \times 10^{-8}$. A learning rate warmup is applied over the first 10\% of training iterations, followed by a cosine annealing schedule that decays the learning rate to 10\% of its initial value. For MeCeFO, the SVD frequency is set to $\tau = 100$. The number of training steps, batch sizes, and initial learning rates are listed in Table~\ref{tab:llama_arch} and are tuned exclusively for optimizing baseline fault-free training performance.


\begin{table}[t]
\centering
\caption{\small Architecture and Hyperparameters of Different LLaMA Models.}
\label{tab:llama_arch}
\begin{tabular}{lccccccc}
\toprule
Model & Hidden & Intermediate & Heads & Layers & Steps & Batch Size & Learning Rate\\
\midrule
LLaMA-350M & 1024 & 2736 & 16 & 24 & 6k & 8192 & $8\times 10^{-4}$\\
LLaMA-1B   & 2048 & 5461 & 32 & 24 & 20k & 4096 & $6\times 10^{-4}$\\
LLaMA-7B   & 4096 & 11008 & 32 & 32 & 60k & 1024 & $4\times 10^{-4}$\\ 
\bottomrule
\end{tabular}
\end{table}

{\color{black}
\noindent\textbf{Failure Modeling in Fault-Tolerant Computing. }
Our work is related to fault-tolerant computing and reliability in distributed training. Prior studies have often modeled system failures using exponential or shifted-exponential distributions motivated by straggler effects \cite{dean2013tail,wang2015using,dutta2016short}. In contrast, our focus is primarily on sudden hardware failures (e.g., node crashes), which we approximate as memoryless events. This motivates the adoption of a Poisson process assumption, under which each node is assigned a constant failure probability per iteration. Importantly, if one adopts a stricter time-based Poisson model, then methods with lower throughput (i.e., longer iteration times) would experience higher effective failure probabilities. Since baseline methods tend to suffer more severe throughput degradation under failure than MeCeFO, such a model would further amplify the relative advantage of our approach.

}


\end{document}